\newtheorem{theorem}{Theorem}
\newtheorem{definition}{Definition}
\newtheorem{remark}{Remark}
\newtheorem{corollary}{Corollary}
\newtheorem{example}{Example}
\newtheorem{lemma}{Lemma}
\newcommand{\tuple}[1]{\langle #1 \rangle}
\newcommand{\lrsim}{\overset{\rightarrow}{\rule{5.8pt}{0.5pt}}}
\newcommand{\bisim}{\overset{\leftrightarrow}{\rule{5.8pt}{0.5pt}}}
\newcommand{\ghmldiamond}[1]{\langle\negthinspace\langle #1 \hspace{1pt} \rangle\negthinspace\rangle}
\newcommand{\tightoverset}[2]{%
  \mathop{#2}\limits^{\vbox to -.5ex{\kern-0.6ex\hbox{$#1$}\vss}}}
\newcommand{\leftendpoint}[1]{{\tightoverset{\leftharpoonup}{{#1}_{~}\negthinspace}\negthinspace}}
\newcommand{\rightendpoint}[1]{{\tightoverset{\rightharpoonup}{{#1}_{~}\negthinspace}\negthinspace}}
\title{Bisimulation and Hennessy-Milner Logic for Generalized Synchronization Trees\footnote{This work was supported by NSF grant CNS-1446665.}}
\author{James Ferlez\textsuperscript{1,2}, Rance Cleaveland\textsuperscript{1,3} and Steve Marcus\textsuperscript{1,2} \\
\textsuperscript{\tiny 1} {\small \emph{The Institute for Systems Research, University of Maryland, College Park}}\\
\textsuperscript{\tiny 2} {\small \emph{Department of Electrical and Computer Engineering, University of Maryland, College Park}} \\
\textsuperscript{\tiny 3} {\small \emph{Department of Computer Science, University of Maryland, College Park}}}
\date{\today}
\begin{document}

\maketitle %

\begin{abstract}
In this work, we develop a generalization of Hennessy-Milner Logic (HML) for 
Generalized Synchronization Trees (GSTs) that we call Generalized Hennessy 
Milner Logic (GHML). Importantly, this logic suggests a strong relationship 
between (weak) bisimulation for GSTs and ordinary bisimulation for 
Synchronization Trees (STs). We demonstrate that this relationship can be used 
to define the GST analog for image-finiteness of STs. Furthermore, we 
demonstrate that certain maximal Hennessy-Milner classes of STs have 
counterparts in maximal Hennessy-Milner classes of GSTs with respect to GST 
weak bisimulation. We also exhibit some interesting characteristics of these 
maximal Hennessy-Milner classes of GSTs. 
\end{abstract}

\section{Introduction} 
\label{sec:introduction}
	In the context of discrete systems modeled as Synchronization Trees (STs), 
	Hennessy and Milner first 
	noticed a relationship between bisimulation and a simple modal logic, 
	subsequently to be known as Hennessy-Milner logic (HML) 
	\cite{hennessy_algebraic_1985}. In particular, they observed that HML 
	characterizes bisimulation within the class of image-finite STs in the 
	following sense: two image-finite STs are bisimilar if and only if they 
	satisfy exactly the same HML formulas.
	Subsequent to Hennessy and Milner's original work, HML has likewise been 
	shown to characterize bisimulation within other classes of STs (though not 
	the class of \emph{all} STs \cite{milner_handbook_1990}). Indeed, any class 
	of STs for which modal equivalence implies bisimulation is known as a 
	\emph{Hennessy-Milner class} (HM class), and a number of maximal ones have 
	been exhibited \cite{hollenberg_hennessy-milner_1995}. 

	Such characterizations of bisimulation have significant ramifications for 
	the verification of system properties: if two systems belong to the same HM 
	class, then they can be checked for bisimulation equivalence by checking 
	HML formulas instead. In addition, if two such systems are \emph{not} 
	bisimilar, then an HML formula can bear witness to this lack of 
	bisimilarity~\cite{cleaveland1990automatically}.  For a simple logic such 
	as HML, this is a considerable advantage. Moreover, the existence of 
	\emph{maximal} HM classes is particularly important in the study of STs and 
	process algebra, given the inherent compositionality of those objects. In 
	particular, it is useful to know which operations preserve membership in a 
	maximal HM class; this has been considered for some cases in 
	\cite{hollenberg_hennessy-milner_1995}. 

	Recently, the authors have proposed Generalized Synchronization Trees 
	(GSTs) \cite{ferlez_generalized_2014} as a flexible modeling framework for 
	\emph{non-discrete} systems such as continuous or hybrid systems. Despite 
	their broader applicability, GSTs have many similarities to STs: elegant 
	composition operators between GSTs are plentiful, and there are well 
	defined notions of bisimulation (see \cite{ferlez_generalized_2014}). Thus, 
	GSTs are natural candidates for the treatment of both HML-like logics and 
	HM classes, especially with a view to studying the composition of 
	continuous and hybrid systems. This paper launches such a study and makes 
	three crucial contributions on the topic of modal logic for non-discrete 
	systems: first, we define a novel generalization of HML that has semantic 
	parity with trajectories in GSTs; second, we use this logic to define a 
	notion of image finiteness for GSTs; and third, we exhibit a partial 
	characterization of maximal HM classes in the context of our generalized 
	HML. The third contribution is particularly novel, since there seem to be 
	no results at all about \emph{maximal} HM classes for generic hybrid system 
	models, much less results in a framework as flexible and compositional as 
	GSTs.  Since GSTs can exhibit infinite -- and even continuous -- 
	non-determinism, they offer a particularly rich setting in which to explore 
	the structure of HM classes. 

	There are other results in the hybrid systems literature that relate 
	bisimulation to modal logics (see e.g. \cite{davoren_simulations_2007}), 
	but these results typically focus on establishing that \emph{bisimulation 
	preserves the satisfaction of formulas} from some modal logic. Almost none 
	consider the problem of specifying when modal equivalence implies 
	bisimilarity, i.e. the identification of Hennessy-Milner classes. The few 
	papers that do consider the problem of identifying HM classes seem to be 
	concerned with probabilistic systems: see 
	\cite{desharnais_bisimulation_2002, doberkat_stochastic_2005, 
	doberkat_hennessymilner_2007, terraf_bisimilarity_2015, 
	doberkat_stochastic_2017, dargenio_bisimulations_2012, budde_theory_2014} 
	for example. However, in most of these papers, the following quote is 
	emblematic of the source of these results: ``... the probabilistic systems 
	we are considering, without explicit nondeterminism, resemble deterministic 
	systems quite closely, rather than nondeterministic systems'' 
	\cite{desharnais_bisimulation_2002}. In other words, these papers typically 
	end up with something like an image-finite assumption, and that forms the 
	basis of their HM classes.  On the other hand, the papers that do not have 
	an image-finiteness assumption (\cite{terraf_bisimilarity_2015, 
	doberkat_stochastic_2017}) always consider more complicated logics than 
	HML, and do not address questions of maximal HM classes. 
\section{Background} 
\label{sec:background}
	This section describes several foundational results that will be used 
	subsequently. The first subsection contains background material on GSTs, 
	including a review of the relevant notions of bisimulation for the same. 
	The second subsection contains background material on HML and Hennessy and 
	Milner's result for image-finite processes. The third subsection describes 
	some maximal Hennessy-Milner classes over Kripke structures 
	\cite{hollenberg_hennessy-milner_1995} and some necessary preliminaries on 
	the canonical model \cite{goldblatt_logics_1992}. Table 
	\ref{table:main_notation} describes some common notation that will be used 
	throughout this section and the rest of the paper.

	\begin{table}[t]
		\begin{center}
		\begin{tabular}{ |ccl| }
			\hline
			$ Sys_1 \; \lrsim_{X} \; Sys_2$ & : &  $Sys_2$ \emph{simulates} $Sys_1$ (w.r.t. simulation notion $X$) \\[2pt]
			$ Sys_1 \; \bisim_{X} \; Sys_2$ & : &  $Sys_1$ and $Sys_2$ are bisimilar (w.r.t. simulation notion $X$) \\[2pt]
			$ s \; \bisim_{X} \; t$ & : &  \emph{States (or worlds, nodes)} $s$ and $t$ are bisimilar (w.r.t. simulation notion $X$) \\[2pt]
			$Sys_1 \; \approx_{Y} \; Sys_2$ & : & $Sys_1$ and $Sys_2$ satisfy the same formulas of logic $Y$ \\[2pt]
			$s \; \approx_{Y} \; t$ & : & \emph{States (or worlds, nodes)} $s$ and $t$ satisfy the same formulas of logic $Y$ \\[2pt]
			\hline
		\end{tabular}
		\end{center}
		\caption{Notation for simulation, bisimulation and modal equivalence.}
		\label{table:main_notation}
	\end{table}
	\subsection{Generalized Synchronization Trees} 
	\label{sub:generalized_synchronization_trees}
	This section summarizes the theory of GSTs as presented in 
	\cite{ferlez_generalized_2014}; the interested reader is referred to that 
	reference for more details.

	GSTs extend Milner's Synchronization Trees (STs) via a generalized notion 
	of tree. In particular, the essential element of a GST is the \emph{tree 
	partial order}, a well known structure in the mathematical literature (see 
	\cite{jech_set_2003}, for example). 
	\begin{definition}[Tree \cite{jech_set_2003,ferlez_generalized_2014}]
	\label{def:tree}
	A \emph{tree} is a triple $\tuple{P, \preceq, p_0}$, where $\tuple{P, 
	\preceq}$ is a partial order, $p_0 \in P$, and the following  hold.
	\begin{enumerate}
		\item $p_0 \preceq p$ for all $p \in P$ ($p_0$ is called the \emph{root} 
			of the tree)
		\item For any $p \in P$ the set $[p_0, p] \triangleq \{p^\prime \in 
			P : p_0 \preceq p^\prime \preceq p \}$ is totally ordered by 
			$\preceq$.
	\end{enumerate}
	\end{definition}
	In this definition of a tree, there is no inherent notion of edge, or 
	``discrete'' transition, so unlike STs external interactivity cannot be 
	captured by labeling edges; such action labels need a different encoding. 
	The definition of GSTs below suggests just such a scheme.   
	\begin{definition}[Generalized Synchronization Tree \cite{ferlez_generalized_2014}]
	\label{definition:gst}
	Let $L$ be a set of labels.  Then a \textbf{Generalized Synchronization 
	Tree (GST)} is a tuple $\tuple{P,\preceq,p_0,\mathcal{L}}$, where:
	\begin{enumerate}
		\item $\tuple{P,\preceq,p_0}$ is a tree in the sense of Definition 
			\ref{def:tree}; and
		\item $\mathcal{L} \in P\backslash\{p_0\} \rightarrow L$ is the labeling 
			function (may be partial).
	\end{enumerate}
	\end{definition}
	Intuitively, labels in a GST are affixed to nodes in the tree.  If the tree 
	is discrete, it can be converted into an ST by moving the node labels onto 
	the incoming edge from the nodes parent (note that roots are not labeled in 
	GSTs, so this operation is well-defined). 

	Another consequence of a lack of discrete transitions is that bisimulation 
	must be defined differently than for STs. Specifically, bisimulation 
	between GSTs is defined in \emph{trajectories}, which are totally ordered 
	sets of nodes in a GST that play roughly the same role as transitions (or 
	sequences of transitions) in discrete bisimulation.
	\begin{definition}[Trajectory \cite{ferlez_generalized_2014}]
	\label{def:trajectory}
	Let $\tuple{P,\preceq,p_0,\mathcal{L}}$ be a GST, and let $p \in P$.  Then 
	a \textbf{trajectory} from $p$ is either:
	\begin{enumerate}
		\item the set $(p,p'] \triangleq \{p'' \in P : p \prec 
			p'' \preceq p' \}$ for some 
			$p' \succ p$, or
		\item a (set-theoretic) maximal linear subset $P' \subseteq P$ with 
			the property that for all $p' \in P'$, $p' \succ p$.
	\end{enumerate}
	Trajectories of the first type are called \textbf{bounded}, while those of 
	the second type are called \textbf{(potentially) unbounded}. 
	\end{definition}

	To account for the labels on the nodes of a trajectory, we define a 
	notion of order equivalence to parallel the notion of 
	identically labeled transitions in a ST: 
	\begin{definition}[Order Equivalence \cite{ferlez_generalized_2014}]
	\label{def:order-equivalence}
	Let $\tuple{P, \preceq_P, p_{0}, \mathcal{L}_P}$ and 
	$\tuple{Q,\preceq_Q,q_{0},\mathcal{L}_Q}$ be GSTs, let $T_p, T_q$ be 
	trajectories from $p \in P$ and $q \in Q$ respectively.  Then $T_p$ and 
	$T_q$ are \textbf{order-equivalent} if there exists a bijection $\lambda 
	\in T_P \rightarrow T_Q$ such that: 
	\begin{enumerate}
		\item $p_1 \preceq_P p_2$ if and only if $\lambda(p_1) \preceq_Q 
			\lambda(p_2)$ for all $p_1, p_2 \in T_P$, and  
		\item $\mathcal{L}_P(p) = \mathcal{L}_Q(\lambda(p))$ for all $p \in 
			T_P$ 
	\end{enumerate}
	When $\lambda$ has this property, we say that $\lambda$ is an \emph{order 
	equivalence} from $T_P$ to $T_Q$. 
	\end{definition}

	We now recall the two notions of 
	simulation from \cite{ferlez_generalized_2014}; corresponding notions of 
	bisimulation can be defined in the obvious way 
	\cite{ferlez_generalized_2014}.
	\begin{definition}[Weak Simulation for GSTs\footnote{``Weak'' is used here only as a relative term; it does not refer to the inclusion of $\tau$ transitions.} \cite{ferlez_generalized_2014}]
	\label{def:weak-simulation}
	Let $G_1 = \tuple{P, \preceq_P, p_{0}, \mathcal{L}_P}$ and $G_2 = 
	\tuple{Q,\preceq_Q,q_{0},\mathcal{L}_Q}$ be GSTs.  Then $R \subseteq P 
	\times Q$ is a \textbf{weak simulation from $G_1$ to $G_2$} if, whenever 
	$\tuple{p,q} \in R$ and $p' \succeq p$, then there is a $q' \succeq q$ such 
	that:
	\begin{enumerate}
	\item $\tuple{p',q'} \in R$, and
	\item Trajectories $(p,p']$ and $(q,q']$ are order-equivalent.
	\end{enumerate}
	We say $G_1 \lrsim_w G_2$ if there is a weak simulation $R$ from $G_1$ to 
	$G_2$ with $\tuple{p_0,q_0} \in R$. 
	\end{definition}
	\begin{definition}[Strong Simulation for GSTs \cite{ferlez_generalized_2014}]
	\label{def:strong-simulation}
	Let $G_1 = \tuple{P, \preceq_P, p_{0}, \mathcal{L}_P}$ and $G_2 = 
	\tuple{Q,\preceq_Q,q_{0},\mathcal{L}_Q}$ be GSTs.  Then $R \subseteq P 
	\times Q$ is a \textbf{strong simulation from $G_1$ to $G_2$} if, whenever 
	$\tuple{p,q} \in R$ and $T_p$ is a trajectory from $p$, there is a 
	trajectory $T_q$ from $q$ and bijection $\lambda \in T_p \rightarrow T_q$ 
	such that: 
	\begin{enumerate}
		\item $\lambda$ is an order equivalence from $T_p$ to $T_q$, and
		\item $\tuple{p',\lambda(p')} \in R$ for all $p' \in T_p$.
	\end{enumerate}
	We write $G_1 \lrsim_s G_2$ if there is a strong simulation $R$ from $G_1$ 
	to $G_2$ with $\tuple{p_0,q_0} \in R$.  
	\end{definition}
	%
	\subsection{Hennessy-Milner Logic and a Hennessy-Milner Class} 
	\label{sub:hennessy_milner_logic_and_a_hennessy_milner_class}
		\subsubsection{Hennessy-Milner Logic} 
		\label{ssub:hennessy_milner_logic}
		Hennessy-Milner Logic is defined as follows; note the lack of 
		atomic propositions.
		\begin{definition}[Hennessy-Milner Logic (HML) \cite{hennessy_algebraic_1985}]
		\label{def:hml_logic} 
		Given a set of labels, $L$, \textbf{Hennessy-Milner Logic (HML)} is the 
		set of formulas $\Phi_{\text{HML}}(L)$ specified as follows, where 
		$\ell \in L$: 
		\begin{equation}
			\varphi := \top \quad | 
				\quad \neg \varphi \quad | 
				\quad \varphi_1 \wedge \varphi_2 \quad | 
				\quad \langle \ell \rangle \varphi.
		\end{equation}
		\end{definition}
		In \cite{hennessy_algebraic_1985}, the semantics of this logic are 
		defined in following familiar way over STs.
		\begin{definition}[Semantics of HML \cite{hennessy_algebraic_1985}]
		\label{def:semantics_of_hml} 
		A satisfaction relation of HML for a set of STs $\mathcal{P}$ is a set 
		$\models \subseteq \mathcal{P} \times \Phi_{\text{HML}}(L)$ such that:
		\begin{enumerate}
			\item $p \models \top$ for all $p \in \mathcal{P}$;

			\item $p \models \neg \varphi$ if and only if $p ~ 
				\cancel{\models} ~ \varphi$;

			\item $p \models \varphi_1 \wedge \varphi_2$ if and only if $p 
				\models \varphi_1$ and $p \models \varphi_2$; and

			\item $p \models \langle l \rangle \varphi$ if and only if there 
				exists a $p^\prime$ such that $ p \overset{l}{\rightarrow} 
				p^\prime$ and $p^\prime \models \varphi$.
		\end{enumerate}
		Here $p \overset{l}{\rightarrow} p'$ means $p'$ is a subtree of $p$ 
		whose root is a child of the root of $p$, with the edge connecting the 
		roots labeled by $l$. 
		\end{definition}

		\begin{remark}
			We will freely avail ourselves of usual derived operators such as 
			$\bot$, $\vee$, $\rightarrow$ and $[\ell]$.  
		\end{remark}
		\subsubsection{Bisimulation and a Hennessy-Milner Class} 
		\label{ssub:bisimulation_and_a_hennessy_milner_class}
		Hennessy and Milner noticed that STs satisfying the same HML 
		formulas need not be bisimilar; i.e. $p \approx_{\text{HML}} q \;\; 
		\cancel{\Rightarrow} \;\; p \; \bisim \; q$ in general 
		\cite{hennessy_algebraic_1985}. Nevertheless, they exhibited a class of 
		STs for which HML modal equivalence \emph{does} imply 
		bisimilarity: that is the class of \emph{image-finite STs}.

		\begin{definition}[Image-Finite Process \cite{hennessy_algebraic_1985}]
		\label{def:image_finite_st}
		A ST $p \in \mathcal{P}$ is said to be image-finite if for each 
		subtree $q$ of $p$ (including $p$ itself) and 
		each label $\ell \in \mathcal{L}$, the set $\{q^\prime : q 
		\overset{\ell}{\rightarrow} q^\prime\}$ is finite.
		\end{definition}
		Hennessy and Milner proved the following theorem.
		\begin{theorem}[Image-Finite Hennessy-Milner Theorem \cite{hennessy_algebraic_1985}]
		\label{thm:hm_theorem_image_finite_sts}
		Let $p$ and $q$ be any two image-finite STs. Then
		\begin{equation}
			\label{eq:hm_theorem_prototype}
			p \; \bisim \; q \Longleftrightarrow p \approx_{\text{HML}} q.
		\end{equation}
		\end{theorem}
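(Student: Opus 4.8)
The plan is to prove the two directions of \eqref{eq:hm_theorem_prototype} separately, with only the right-to-left direction relying on image-finiteness.

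For the forward direction ($\bisim \Rightarrow \approx_{\text{HML}}$), I would show, by structural induction on the HML formula $\varphi$, that if $p \bisim q$ then $p \models \varphi \iff q \models \varphi$; in fact it is cleanest to prove the slightly stronger statement that any bisimulation $R$ preserves satisfaction of every $\varphi$ at related pairs. The base case $\varphi = \top$ is immediate, the Boolean cases $\neg$ and $\wedge$ follow directly from the induction hypothesis (the $\neg$ case is where we need the biconditional in the statement, not just one implication), and the modal case $\langle \ell \rangle \psi$ uses the transfer/zig-zag property of bisimulations on $\ell$-transitions together with the induction hypothesis applied to $\psi$. This direction does not use image-finiteness at all.

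For the converse ($\approx_{\text{HML}} \Rightarrow \bisim$), the strategy is to show that modal equivalence $\approx_{\text{HML}}$ is itself a bisimulation on the class of image-finite STs. Reflexivity and symmetry are clear, so the work is in the transfer condition: assuming $p \approx_{\text{HML}} q$ and $p \overset{\ell}{\rightarrow} p'$, we must produce some $q'$ with $q \overset{\ell}{\rightarrow} q'$ and $p' \approx_{\text{HML}} q'$. Here is where image-finiteness is essential. Let $\{q_1,\dots,q_n\}$ be the (finite, possibly empty) set of $\ell$-successors of $q$. First one argues $n \geq 1$: since $p \models \langle \ell \rangle \top$ and $p \approx_{\text{HML}} q$, we have $q \models \langle \ell \rangle \top$, so $q$ has at least one $\ell$-successor. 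Now suppose toward a contradiction that $p' \not\approx_{\text{HML}} q_i$ for every $i$; then for each $i$ there is a formula $\psi_i$ with $p' \models \psi_i$ but $q_i \not\models \psi_i$ (replacing $\psi_i$ by its negation if necessary so the discrepancy has this orientation). Form the \emph{finite} conjunction $\psi \triangleq \bigwedge_{i=1}^{n} \psi_i$: then $p' \models \psi$, so $p \models \langle \ell \rangle \psi$, yet no $q_i$ satisfies $\psi$, so $q \not\models \langle \ell \rangle \psi$, contradicting $p \approx_{\text{HML}} q$. Hence some $q_i$ satisfies $p' \approx_{\text{HML}} q_i$, and by symmetry the matching condition holds in the other direction as well, so $\approx_{\text{HML}}$ is a bisimulation; since $p \approx_{\text{HML}} q$ by hypothesis, $p \bisim q$.

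The main obstacle is the converse direction, and specifically the step that converts an infinite family of distinguishing formulas into a single distinguishing formula: this works precisely because finitely many $\ell$-successors of $q$ need to be ruled out, so that a finite conjunction of HML formulas suffices. Without image-finiteness one would need infinitary conjunction, which HML does not provide — this is exactly why the theorem fails for arbitrary STs. A secondary point worth handling carefully is the orientation of the distinguishing formulas $\psi_i$ (using closure of HML under negation to ensure $p' \models \psi_i$ and $q_i \not\models \psi_i$), and noting that the argument is applied to every subtree, so image-finiteness is needed hereditarily, matching Definition~\ref{def:image_finite_st}.
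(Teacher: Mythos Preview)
Your proof is correct and is precisely the classical argument due to Hennessy and Milner. Note, however, that the paper itself does not supply a proof of this theorem: it is quoted as background from \cite{hennessy_algebraic_1985}, so there is no in-paper proof to compare against; your write-up simply reproduces the standard one.
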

		\begin{remark}
			Any theorem with a conclusion of 
			the form \eqref{eq:hm_theorem_prototype} is called a 
			\textbf{Hennessy-Milner Theorem}. Likewise, any class of STs 
			(or systems, Kripke structures, etc.) for which a Hennessy-Milner 
			Theorem can be exhibited is called a \textbf{Hennessy-Milner class}.
		\end{remark}
	%
	\subsection{The Canonical Model and (Maximal) Hennessy-Milner Classes} 
	\label{sub:the_canonical_model_and_maximal_hm_classes}
	Since Hennessy and Milner's work \cite{hennessy_algebraic_1985}, other 
	\emph{Hennessy-Milner classes} have been exhibited. In particular, Visser, 
	via Hollenberg \cite{hollenberg_hennessy-milner_1995}, has 
	generalized the idea of a Hennessy-Milner class to Kripke structures, and 
	exhibited certain \emph{maximal} Hennessy-Milner classes of Kripke 
	structures. This section describes the characterization of these classes.

	As a prelude, we introduce the following familiar definitions of modal 
	logic, Kripke structure and bisimulation between Kripke structures.

	\begin{definition}[Modal Logic \cite{goldblatt_logics_1992}]
	\label{def:modal_logic}
	By a \textbf{modal logic}, we mean formulas constructed as in Definition 
	\ref{def:hml_logic} but with the addition of propositional variables 
	(atomic propositions). The set of formulas with a set of 
	propositional variables $\Theta$ and a set of labels $L$ is denoted by 
	$\Phi_{\Theta}(L)$.
	\end{definition}
	\begin{definition}[Kripke structure \cite{goldblatt_logics_1992,hollenberg_hennessy-milner_1995}]
	\label{def:kripke_structure}
	A Kripke structure of a set of labels $L$ and a set of 
	propositional variables $\Theta$ is a tuple $\mathbf{S} = (S, \{R_\ell 
	\subseteq S \times S : \ell \in L \}, V)$ where
	\begin{itemize}
		\item $S$ is the set of states (or worlds);

		\item for each $\ell \in L$, $R_\ell \subseteq S \times S$ 
			is a transition relation for label $\ell$; and

		\item $V : \Theta \rightarrow 2^S$ is a function that maps 
			propositional variables to sets of states.
	\end{itemize}
	\end{definition}
	\begin{definition}[Satisfaction relation for a Kripke structure]
	\label{def:sat_rel_kripke}
	Given a Kripke structure $\mathbf{S} = (S, \{R_\ell \subseteq S \times S : 
	\ell \in L \}, V) $, a satisfaction relation $\models \subseteq S \times 
	\Phi_{\Theta}(L)$ is defined as in Definition \ref{def:hml_logic} with the 
	addition that for any $\theta \in \Theta$, $s \models \theta$ if and only 
	if $s \in V(\theta)$.
	\end{definition}
	\begin{definition}[Bisimulation between Kripke structures]
	\label{def:bisim_kripke_structures}
	Given two Kripke structures $\mathbf{S}$ and $\mathbf{T}$,
	we say that $s \in S$ and $t \in T$ are bisimilar 
	or $s \; \bisim_{K} \; t$ if there is a bisimulation relation $\sim$ such 
	that $s \sim t$, and for all $s^\prime \in S, t^\prime \in T$, and $\theta 
	\in \Theta$, $s^\prime \sim t^\prime$ implies $s^\prime \models \theta 
	\Leftrightarrow t^\prime \models \theta$. Bisimulation between Kripke 
	structures is defined in the obvious way.
	\end{definition}
	\subsubsection{The Canonical Model for a Modal Logic} 
		\label{ssub:the_canonical_model_for_a_modal_logic}

		The so-called \emph{canonical model} -- called the \emph{Henkin model} 
		in \cite{hollenberg_hennessy-milner_1995} -- is a special Kripke 
		structure that is one of the most important tools in the study of 
		(normal) modal logic(s). In the canonical model
		states (or worlds) are defined in terms of the modal 
		formulas that they satisfy. In our context it has an important 
		connection to the maximal Hennessy-Milner classes we will consider; see 
		Subsection \ref{ssub:hennessy_milner_classes_for_kripke_structures}. 
		This subsection is meant to be a summary of the relevant material 
		in \cite{goldblatt_logics_1992}; the full details can be found therein.

		In order to define the canonical model, we must first establish certain 
		consistency criteria that we will enforce on any ``reasonable'' set of 
		formulas. Any such ``reasonable'' set of formulas will (somewhat 
		confusingly) be called a \emph{logic}.
		\begin{definition}[Logic \cite{goldblatt_logics_1992}]
		\label{def:logic}
		A set of modal formulas $\Lambda \subseteq \Phi_{\Theta}(L)$ 
		is called a \textbf{logic} if it satisfies:
		\begin{enumerate}
			\item $\Lambda$ contains all of the \textbf{tautologies}: that is 
				all of the formulas which are true irrespective of how we 
				assign truth values to modal sub-formulas and propositional 
				variables; and 

			\item $\Lambda$ is closed under modus ponens: if $\varphi_1 \in 
				\Lambda$ and $\varphi_1 \rightarrow \varphi_2 \in 
				\Lambda$\footnote{In terms of HML, $\varphi_1 \rightarrow 
				\varphi_2$ is shorthand for $\neg(\varphi_1 \wedge \neg 
				\varphi_2)$.}, then $\varphi_2 \in \Lambda$.
		\end{enumerate}
		\end{definition}
		\begin{definition}[$\Lambda$-Consistent Set of Modal Formulas \cite{goldblatt_logics_1992}]
		\label{def:consistent}
		Given a logic $\Lambda \subseteq \Phi_{\Theta}(L)$, a set of 
		modal formulas $\Gamma \subseteq \Phi_{\Theta}(L)$ is said to 
		be $\Lambda$\textbf{-consistent} if there is no formula of the form 
		$\varphi_0 \rightarrow ( \varphi_1 \rightarrow ( \dots \rightarrow ( 
		\varphi_n \rightarrow \bot) \dots )$ in $\Lambda$, where $\varphi_0, 
		\dots, \varphi_n \in \Gamma$.
		\end{definition}
		\begin{definition}[$\Lambda$-maximal set of formulas \cite{goldblatt_logics_1992}]
		\label{def:lambda_maximal_set}
		Given a logic $\Lambda$, a set of formulas $\Gamma \subseteq 
		\Phi_{\Theta}(L)$ is said to be \textbf{$\Lambda$-maximal} if 
		it satisfies the following two properties:
		\begin{enumerate}
			\item $\Gamma$ is $\Lambda$-consistent; and 

			\item for all $\varphi \in \Phi_{\Theta}(L)$, either $\varphi 
				\in \Gamma$ or $\neg\varphi \in \Gamma$.
		\end{enumerate}
		\end{definition}
		\begin{lemma}[Lindenbaum's Lemma \cite{goldblatt_logics_1992}]
		\label{lemma:lindenbaums_lemma}
		Given a logic $\Lambda$ and a $\Lambda$-consistent set of formulas 
		$\Gamma$, there exists a $\Lambda$-maximal set of formulas $\Gamma_0 
		\subseteq \Phi_{\Theta}(L)$  such that $\Gamma \subseteq 
		\Gamma_0$.
		\end{lemma}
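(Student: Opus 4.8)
The plan is to follow the classical Lindenbaum argument, adapted to the nested-implication presentation of $\Lambda$-consistency in Definition~\ref{def:consistent}. First I would record the elementary observation that $\Lambda$-consistency is a property of finite subsets: a set $\Gamma$ is $\Lambda$-consistent if and only if every finite subset of it is, since the forbidden formulas $\varphi_0 \rightarrow ( \varphi_1 \rightarrow \cdots \rightarrow ( \varphi_n \rightarrow \bot ) \cdots )$ mention only finitely many elements of $\Gamma$ at a time. An immediate consequence is that the union of any $\subseteq$-chain of $\Lambda$-consistent supersets of $\Gamma$ is again $\Lambda$-consistent, since a finite subset of the union already lies in one member of the chain.

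Second --- and this is the crux --- I would prove the extension lemma: if $\Gamma$ is $\Lambda$-consistent, then for every $\varphi \in \Phi_{\Theta}(L)$ at least one of $\Gamma \cup \{\varphi\}$ and $\Gamma \cup \{\neg\varphi\}$ is $\Lambda$-consistent. Arguing by contradiction, inconsistency of $\Gamma \cup \{\varphi\}$ yields a formula of the forbidden shape whose antecedents are drawn from $\Gamma \cup \{\varphi\}$; using that $\Lambda$ contains all tautologies and is closed under modus ponens, I would rearrange the nested implications (permuting antecedents so that the occurrences of $\varphi$ are innermost, and merging duplicates) to obtain a formula $\chi_0 \rightarrow ( \cdots \rightarrow ( \chi_k \rightarrow \neg\varphi ) \cdots ) \in \Lambda$ with every $\chi_i \in \Gamma$. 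Symmetrically, inconsistency of $\Gamma \cup \{\neg\varphi\}$ yields $\chi_0' \rightarrow ( \cdots \rightarrow ( \chi_m' \rightarrow \neg\neg\varphi ) \cdots ) \in \Lambda$ with every $\chi_j' \in \Gamma$. Combining the two via the propositional tautology asserting that $\neg\varphi$ and $\neg\neg\varphi$ are jointly contradictory (again invoking membership of tautologies and closure under modus ponens) produces a formula of the forbidden shape whose antecedents all lie in $\Gamma$, contradicting $\Lambda$-consistency of $\Gamma$.

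Third, I would assemble $\Gamma_0$. In full generality --- the label set $L$ and the propositional-variable set $\Theta$ may be uncountable, which is natural once GSTs with continuous nondeterminism are in play --- I would apply Zorn's Lemma to the poset of $\Lambda$-consistent subsets of $\Phi_{\Theta}(L)$ that contain $\Gamma$, ordered by inclusion; the chain condition is exactly the first observation, so a maximal element $\Gamma_0$ exists. By the extension lemma, maximality forces, for every $\varphi$, that $\varphi \in \Gamma_0$ or $\neg\varphi \in \Gamma_0$ (otherwise one of the two proper extensions would remain $\Lambda$-consistent, contradicting maximality of $\Gamma_0$); together with $\Lambda$-consistency this is precisely $\Lambda$-maximality in the sense of Definition~\ref{def:lambda_maximal_set}, and $\Gamma \subseteq \Gamma_0$ by construction. (When $\Phi_{\Theta}(L)$ is countable one may instead keep the argument choice-light by enumerating the formulas and, at stage $n$, adding whichever of $\varphi_n$, $\neg\varphi_n$ preserves $\Lambda$-consistency, then taking the union of the resulting chain.)

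I expect the only genuine obstacle to be the bookkeeping in the extension lemma: because Definition~\ref{def:consistent} phrases inconsistency as a single deeply nested implication rather than via a derivability relation, the step that ``moves $\varphi$ inward'' and then discharges it must be justified concretely by exhibiting the relevant propositional tautologies inside $\Lambda$ and chaining modus ponens. Everything else --- the finitary character of consistency, the chain argument, and the final extraction of $\Lambda$-maximality --- is routine.
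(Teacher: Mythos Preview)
Your argument is correct: the finite character of $\Lambda$-consistency, the extension lemma, and the Zorn (or enumeration) construction together give exactly $\Lambda$-maximality in the sense of Definition~\ref{def:lambda_maximal_set}. The bookkeeping you flag in the extension step is indeed the only place that needs care, and your plan for handling it (permute antecedents via tautologies, then combine the two derived implications) is sound.

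As for comparison with the paper: the paper does not prove Lemma~\ref{lemma:lindenbaums_lemma} at all --- it is quoted as background from Goldblatt. The only Lindenbaum-type argument the paper supplies is for the GHML variant (Theorem~\ref{thm:maximal_consistent_ghml_sets}), and there the entire proof is the one-line remark that the formulas form a set, so Zorn's Lemma applies. Your proposal therefore takes the same high-level route (Zorn on the poset of $\Lambda$-consistent extensions) but actually discharges the two obligations the paper leaves implicit: that chains have consistent unions, and that a $\subseteq$-maximal consistent set is negation-complete. In that sense your write-up is strictly more informative than what appears in the paper, while remaining fully compatible with its approach.
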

		\begin{corollary}[The set of maximal $\Lambda$-consistent sets of formulas is non-empty \cite{goldblatt_logics_1992}]
		\label{cor:lambda_max}
		Given a logic $\Lambda$, let $S^\Lambda$ denote the set of 
		$\Lambda$-maximal sets of formulas. Then $S^\Lambda$ is non-empty.
		\end{corollary}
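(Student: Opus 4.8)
The plan is to produce a single $\Lambda$-maximal set by applying Lindenbaum's Lemma (Lemma~\ref{lemma:lindenbaums_lemma}) to a conveniently chosen seed set, since Lindenbaum's Lemma already does all the real work of extending a $\Lambda$-consistent set to a $\Lambda$-maximal one. So the only thing to supply is \emph{some} $\Lambda$-consistent set $\Gamma$. First I would take $\Gamma = \emptyset$ (equivalently, $\Gamma = \{\top\}$) and check that it is $\Lambda$-consistent in the sense of Definition~\ref{def:consistent}: the only way $\emptyset$ could fail to be $\Lambda$-consistent is if $\Lambda$ contained a witness to inconsistency built entirely from formulas of $\Gamma = \emptyset$, and the only such (degenerate, zero-length) witness is $\bot$ itself. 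Since $\Lambda$ is a (consistent) logic, $\bot \notin \Lambda$, so $\emptyset$ is $\Lambda$-consistent. (For $\Gamma=\{\top\}$ the same conclusion follows because $\top \rightarrow (\dots \rightarrow (\top \rightarrow \bot))$ is a tautological equivalent of $\bot$, and $\Lambda$ contains all tautologies and is closed under modus ponens.)

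Next I would invoke Lindenbaum's Lemma with this $\Gamma$: it yields a $\Lambda$-maximal set $\Gamma_0 \subseteq \Phi_{\Theta}(L)$ with $\Gamma \subseteq \Gamma_0$. By the definition of $S^\Lambda$, $\Gamma_0 \in S^\Lambda$, hence $S^\Lambda \neq \emptyset$, which is exactly the claim.

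I do not anticipate a genuine obstacle here; the corollary is essentially an existence-of-a-base-point observation riding on Lindenbaum's Lemma. The one point deserving care is the base case, namely that the chosen seed really is $\Lambda$-consistent, which in turn reduces to the standing assumption that $\Lambda$ itself is consistent ($\bot \notin \Lambda$). If one allowed genuinely inconsistent logics, the statement would have to be read as restricted to consistent logics, as is standard; I would note this explicitly rather than belabour it.
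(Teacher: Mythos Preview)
Your proposal is correct and is exactly the standard argument the paper has in mind: the corollary is stated immediately after Lindenbaum's Lemma (Lemma~\ref{lemma:lindenbaums_lemma}) with no separate proof, so the intended reasoning is precisely to feed a trivially $\Lambda$-consistent seed (such as $\emptyset$ or $\{\top\}$) into that lemma. Your care about the base case---that this reduces to $\bot \notin \Lambda$, i.e.\ the implicit assumption that $\Lambda$ is consistent---is the right caveat and matches the usual treatment in Goldblatt.
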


		For our purposes, Corollary \ref{cor:lambda_max} tells us that the set 
		of maximally consistent sets of formulas is non-empty, and hence, can 
		be used as the set of worlds for the canonical model. The next step in 
		the construction of the canonical model is to define the 
		\emph{transitions} between these states; these must ensure  that a 
		state -- a $\Lambda$-maximal set of formulas -- satisfies a modal 
		formula if an only if that formula is an element of the set. This 
		desirable property -- called the ``Henkin property'' in 
		\cite{hollenberg_hennessy-milner_1995} -- is the essence of the value 
		of the canonical model. It turns out that some additional conditions 
		must be imposed on a logic $\Lambda$ before transitions can be defined 
		between $\Lambda$-maximal sets of formulas in such a way that the 
		Henkin property holds. 

		\begin{definition}[Normal Logic \cite{goldblatt_logics_1992}]
		\label{def:normal_logic}
		A logic $\Lambda$ is \textbf{normal} if it satisfies:
		\begin{enumerate}
			\item for all $\varphi_1, \varphi_2 \in 
				\Phi_{\Theta}(L)$ and $\ell \in L$, the 
				formula $[\ell](\varphi_1 \rightarrow \varphi_2) \rightarrow 
				([\ell]\varphi_1 \rightarrow [\ell]\varphi_2 )$ is in 
				$\Lambda$;\footnote{Recall that $[\ell] \varphi = \neg \langle 
				\ell \rangle \neg \varphi$.}

			\item for all $\varphi \in \Lambda$ and $\ell \in L$, 
				$[\ell]\varphi \in \Lambda$.
		\end{enumerate}
		\end{definition}
		With this definition in hand, we can define the \emph{canonical model}.

		\begin{definition}[Canonical (Henkin) Model \cite{goldblatt_logics_1992,hollenberg_hennessy-milner_1995}]
		\label{def:canonical_model}
		Let $\Lambda \subseteq \Phi_{\Theta}(L)$ be a normal logic. 
		Then the \textbf{canonical model} is the Kripke structure 
		$\mathbf{C}^\Lambda = (S^\Lambda, \{R_\ell^\Lambda : \ell \in 
		L\}, V^\Lambda)$ defined as follows:
		\begin{itemize}
			\item $S^\Lambda$ is the set of states (worlds);

			\item for each $\ell \in L$, the transition relation 
				$R_\ell^\Lambda \subseteq S^\Lambda \times S^\Lambda$ is 
				defined such that $s R_\ell t$ if and only if $\varphi \in t$ 
				implies that $\langle \ell \rangle \varphi \in s$.

			\item the valuation $V^\Lambda : \Theta \rightarrow S^\Lambda$ 
				is defined such that $V(p) = \{ s \in S^\Lambda : p \in s\}$.
		\end{itemize}
		\end{definition}
		\begin{theorem}[The Canonical Model Satisfies the Henkin Property \cite{goldblatt_logics_1992,hollenberg_hennessy-milner_1995}]
		\label{thm:henkin_property_canonical_model}
		For any state $s$ in the canonical model $\mathbf{C}^\Lambda$ and any 
		formula $\varphi \in \Phi_{\Theta}(L)$:
		$ s \models \varphi \Longleftrightarrow \varphi \in s$
		(the aforementioned \textbf{Henkin property}).
		\end{theorem}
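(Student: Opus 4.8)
The plan is to prove the biconditional by induction on the structure of $\varphi$, following the standard ``truth lemma'' argument for canonical models. Before starting the induction I would record a few routine facts about an arbitrary $\Lambda$-maximal set $\Gamma$ that follow immediately from Definitions~\ref{def:consistent} and~\ref{def:lambda_maximal_set} together with the fact that $\Lambda$ contains all tautologies and is closed under modus ponens: (i) $\Lambda \subseteq \Gamma$; (ii) $\Gamma$ is closed under modus ponens, i.e. $\psi \in \Gamma$ and $\psi \rightarrow \chi \in \Gamma$ imply $\chi \in \Gamma$; (iii) $\psi \wedge \chi \in \Gamma$ iff $\psi \in \Gamma$ and $\chi \in \Gamma$; and (iv) exactly one of $\psi$, $\neg\psi$ lies in $\Gamma$. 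Each is proved by exhibiting an appropriate tautology in $\Lambda$ and invoking $\Lambda$-consistency, so I would dispatch them quickly. (Corollary~\ref{cor:lambda_max} already guarantees such $\Gamma$ exist, so $\mathbf{C}^\Lambda$ is well defined.)

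For the induction the base cases are immediate: $s \models \top$ always and $\top \in s$ by (i), while $s \models \theta \iff s \in V^\Lambda(\theta) \iff \theta \in s$ directly from the definition of $V^\Lambda$ in Definition~\ref{def:canonical_model}. The Boolean cases $\varphi = \neg\psi$ and $\varphi = \psi_1 \wedge \psi_2$ follow from the induction hypothesis together with facts (iv) and (iii) respectively.

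The heart of the argument is the modal case $\varphi = \langle \ell \rangle \psi$. For the ($\Rightarrow$) direction, if $s \models \langle \ell \rangle \psi$ then there is a state $t$ of $\mathbf{C}^\Lambda$ with $s\, R_\ell^\Lambda\, t$ and $t \models \psi$; the induction hypothesis gives $\psi \in t$, and the definition of $R_\ell^\Lambda$ in Definition~\ref{def:canonical_model} then yields $\langle \ell \rangle \psi \in s$. For ($\Leftarrow$) --- the ``existence'' step --- assume $\langle \ell \rangle \psi \in s$ and consider the set $\Delta = \{\psi\} \cup \{\chi : [\ell]\chi \in s\}$. I would first argue $\Delta$ is $\Lambda$-consistent: otherwise finitely many $\chi_1,\dots,\chi_n$ with each $[\ell]\chi_i \in s$ would, by propositional reasoning inside $\Lambda$, give $(\chi_1 \wedge \cdots \wedge \chi_n) \rightarrow \neg\psi \in \Lambda$; necessitation and the $K$-axiom (both guaranteed by normality of $\Lambda$, Definition~\ref{def:normal_logic}) then push this under $[\ell]$, and combining with $[\ell]\chi_1,\dots,[\ell]\chi_n \in s$ via modus ponens inside $s$ would yield $[\ell]\neg\psi \in s$, i.e. $\neg\langle\ell\rangle\psi \in s$, contradicting $\Lambda$-consistency of $s$. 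By Lindenbaum's Lemma (Lemma~\ref{lemma:lindenbaums_lemma}) extend $\Delta$ to a $\Lambda$-maximal set $t$. Then $[\ell]\chi \in s$ forces $\chi \in t$, which --- using (iv) together with consistency of $s$ and $t$ --- is precisely the condition $s\, R_\ell^\Lambda\, t$; and since $\psi \in t$, the induction hypothesis gives $t \models \psi$, so $s \models \langle\ell\rangle\psi$, as required.

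I expect the only real obstacle to be the consistency check for $\Delta$ in the ($\Leftarrow$) direction: getting the bookkeeping right in passing from a propositional-style derivation of $(\chi_1 \wedge \cdots \wedge \chi_n) \rightarrow \neg\psi$ to a membership statement $[\ell]\neg\psi \in s$ is exactly where normality of $\Lambda$ is essential, and it is the one place where more than trivial manipulation of maximal consistent sets is needed. Everything else is routine.
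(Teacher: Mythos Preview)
Your argument is the standard truth-lemma proof and is correct; the only point I would tighten is the final verification that $s\,R_\ell^\Lambda\,t$ in the $(\Leftarrow)$ direction, since the paper's Definition~\ref{def:canonical_model} phrases the relation as ``$\varphi \in t$ implies $\langle\ell\rangle\varphi \in s$'' rather than the dual ``$[\ell]\chi \in s$ implies $\chi \in t$'', so the contrapositive step you allude to via (iv) deserves to be written out explicitly.

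That said, the paper does not actually supply a proof of this theorem: it is stated in Section~\ref{ssub:the_canonical_model_for_a_modal_logic} purely as cited background from \cite{goldblatt_logics_1992,hollenberg_hennessy-milner_1995}, with the surrounding text explicitly deferring ``the full details'' to Goldblatt. So there is no in-paper argument to compare against; what you have written is essentially the proof one finds in the cited reference, and it would serve perfectly well if the paper wished to be self-contained here.
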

		%
		\subsubsection{Hennessy-Milner Classes for Kripke Structures} 
		\label{ssub:hennessy_milner_classes_for_kripke_structures}
		It is important to note that in Hennessy and Milner's definition of 
		image-finite STs, all subtrees must be image-finite: in other 
		words, the set of image-finite STs is \emph{closed under 
		subtrees}. Thus, one could think about generalizing Theorem 
		\ref{thm:hm_theorem_image_finite_sts} by examining when modal 
		equivalence implies bisimulation for a larger class of STs that 
		is closed under subtrees. The following definition captures that 
		spirit in the context of Kripke structures, but it does so without 
		insisting on image-finiteness.
		\begin{definition}[Visser/Hollenberg Hennessy-Milner Property \cite{hollenberg_hennessy-milner_1995}]
		\label{def:visser_hm_property}
		Let $\mathfrak{H}$ be a set of Kripke structures. $\mathfrak{H}$ is 
		said to satisfy the \textbf{Visser/Hollenberg Hennessy-Milner Property 
		(VHHM property) with respect to $\Phi_{\Theta}(L)$} if for any two 
		Kripke structures $\mathbf{S}, \mathbf{T} \in \mathfrak{H}$ and 
		\textbf{any two states} $s^\prime \in S$ and $t^\prime \in T$
			\begin{equation}
				s^\prime \; \bisim \; t^\prime \quad \Leftrightarrow 
				\quad s^\prime \approx_{\Phi_{\Theta}(L)} t^\prime.
			\end{equation}
		\end{definition}
		\begin{remark}
		\label{rem:hm_property}
		We use the terminology VHHM property to distinguish this property from  
		another definition of \textbf{Hennessy-Milner Property} in the 
		literature, which considers only modal equivalence and bisimulation 
		between initial states (the points of pointed Kripke structures). For 
		example, this definition is used in \cite{goranko_model_2007}. However, 
		we note that a number of other sources use what we call the VHHM 
		property; see \cite{goldblatt_saturation_1995} for example.
		\end{remark}
		\begin{definition}[Visser-Hollenberg Hennessy-Milner Class]
		\label{def:vh_hm_class}
		We say that any set of Kripke structures that satisfies the 
		Visser/Hollenberg Hennessy-Milner Property is a 
		\textbf{Visser-Hollenberg Hennessy-Milner class (VHHM class)}. 
		\end{definition}
		Definition \ref{def:visser_hm_property} seems innocuous, but in fact 
		the VHHM property is a nontrivial strengthening of the HM property 
		described in Remark \ref{rem:hm_property}. To the best of our 
		knowledge, there are no results in the literature that compare VHHM 
		classes with this alternate definition of HM classes. We will revisit 
		this in Section \ref{sub:properties_of_maximal_vhhm_classes_of_gsts} 
		where we exhibit a Kripke structure that fails to be a member of any 
		VHHM class because it fails to satisfy the conditions of Definition 
		\ref{def:visser_hm_property}.

		Importantly, though, there is an elegant characterization of 
		\emph{maximal} VHHM classes due to Visser and reported in 
		\cite{hollenberg_hennessy-milner_1995}. We 
		first define the notion of a ``Henkin-like'' model 
		\cite{hollenberg_hennessy-milner_1995}.

 		\begin{definition}[Henkin-like model \cite{hollenberg_hennessy-milner_1995}]
		\label{def:henkin_like_model}
		Let $\mathbf{C}^K$ be the canonical model associated with the smallest 
		normal logic $K$. Then a \textbf{Henkin-like} model is any Kripke 
		structure $\mathbf{HC}^K = (S^K, \{R^{\mathbf{H}K}_\ell \subseteq 
		R^K_\ell : \ell \in L\}, V^K)$ that satisfies the Henkin property (see 
		Theorem \ref{thm:henkin_property_canonical_model} and the discussion 
		preceding it).
		\end{definition}
		Thus, a Henkin-like model is simply the canonical model with  
		transitions removed in such a way that a state satisfies a formula if 
		and only that formula is an element of the state (recall that the 
		states in $\mathbf{C}^K$ are sets of formulas). Henkin-like models form 
		the basis for maximal VHHM classes in the following sense.
		\begin{theorem}[Maximal VHHM Classes \cite{hollenberg_hennessy-milner_1995}]
		\label{thm:vhhm_maximal_classes}
		Let $\mathbf{HC}^K$ be any Henkin-like model, and let 
		$\mathsf{S}(\mathbf{HC}^K)$ be the set of generated sub-models of 
		$\mathbf{HC}^K$. Then
		\begin{enumerate}
			\item  The set of all Kripke structures that are bisimilar to a 
				model in $\mathsf{S}(\mathbf{HC}^K)$ is a maximal VHHM class; 
				that is it is maximal in a set-theoretic sense. We denote such 
				a class by $\mathsf{BS}(\mathbf{HC}^K)$ .

			\item Let $\mathfrak{H}$ be any set of Kripke structures that 
				satisfies the VHHM property. Then $\mathfrak{H} \subseteq 
				\mathsf{BS}(\mathbf{HC}^K)$ for at least one Henkin-like model 
				$\mathbf{HC}^K$.
		\end{enumerate}
		\end{theorem}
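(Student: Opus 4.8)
The plan is to prove the two assertions separately, and in both to rely on one pivotal observation: for \emph{any} world $s'$ of \emph{any} Kripke structure, the set $\mathrm{th}(s') \triangleq \{\varphi \in \Phi_{\Theta}(L) : s' \models \varphi\}$ is a $K$-maximal set of formulas, hence a world of $\mathbf{C}^K$, and therefore also a world of every Henkin-like model $\mathbf{HC}^K$ (these all share the world set $S^K$). It is convenient to first record an auxiliary characterization: a Kripke structure $\mathbf{S}$ lies in $\mathsf{BS}(\mathbf{HC}^K)$ \emph{if and only if} every world of $\mathbf{S}$ is bisimilar, as a pointed model, to some world of $\mathbf{HC}^K$. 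The ``only if'' direction is immediate, since a bisimulation between two structures must be total on both; for ``if'', one checks that $Z = \{(s',u) : s' \bisim u\}$ is a bisimulation, takes $\mathbf{M}$ to be the submodel of $\mathbf{HC}^K$ generated by the $\mathbf{HC}^K$-worlds occurring in $Z$, and verifies by a forward/backward chase that $Z$ restricted to $\mathbf{M}$ is total on $\mathbf{M}$, so that $\mathbf{S} \bisim \mathbf{M} \in \mathsf{S}(\mathbf{HC}^K)$. Throughout I use the standard facts that bisimilar pointed models are modally equivalent, that pointed bisimilarity is an equivalence (bisimulations compose), and that a generated submodel $\mathbf{M}$ of $\mathbf{HC}^K$ embeds bisimilarly into $\mathbf{HC}^K$ via the identity on worlds of $\mathbf{M}$.

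For part~1, I first show $\mathsf{BS}(\mathbf{HC}^K)$ has the VHHM property. Suppose $\mathbf{S},\mathbf{T} \in \mathsf{BS}(\mathbf{HC}^K)$ and $s' \approx t'$. The auxiliary characterization gives worlds $\hat s, \hat t$ of $\mathbf{HC}^K$ with $s' \bisim \hat s$ and $t' \bisim \hat t$; hence $\hat s \approx \hat t$, so by the Henkin property (Theorem~\ref{thm:henkin_property_canonical_model}) the sets $\hat s$ and $\hat t$ contain exactly the same formulas, i.e.\ $\hat s = \hat t$ as worlds of $\mathbf{HC}^K$. Therefore $s' \bisim \hat s = \hat t \bisim t'$, so $s' \bisim t'$. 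For maximality, let $\mathfrak{H} \supseteq \mathsf{BS}(\mathbf{HC}^K)$ also satisfy the VHHM property and suppose, for contradiction, that some $\mathbf{S} \in \mathfrak{H}$ is not in $\mathsf{BS}(\mathbf{HC}^K)$. By the auxiliary characterization there is a world $s'$ of $\mathbf{S}$ bisimilar to no world of $\mathbf{HC}^K$. But $\mathrm{th}(s')$ \emph{is} a world of $\mathbf{HC}^K$, and the Henkin property yields $s' \approx \mathrm{th}(s')$; moreover $\mathbf{HC}^K$ is trivially a generated submodel of itself, so $\mathbf{HC}^K \in \mathsf{S}(\mathbf{HC}^K) \subseteq \mathsf{BS}(\mathbf{HC}^K) \subseteq \mathfrak{H}$. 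Thus $\mathfrak{H}$ contains two structures with modally equivalent but non-bisimilar worlds, contradicting its VHHM property; hence $\mathfrak{H} = \mathsf{BS}(\mathbf{HC}^K)$.

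For part~2, given a VHHM class $\mathfrak{H}$, I construct a suitable Henkin-like model from it. Let $\mathbf{U}$ be the disjoint union of (a set of bisimulation-representatives of) the structures in $\mathfrak{H}$, and let $f : \mathbf{U} \to S^K$ be the theory map $f(s) = \mathrm{th}(s)$; note that $s \derives{\ell} t$ in $\mathbf{U}$ implies $f(s) \mathbin{R^K_\ell} f(t)$. Define $\mathbf{HC}^K$ by keeping, out of each world $u \in f[\mathbf{U}]$, exactly the transitions $u \derives{\ell} f(t)$ such that $s \derives{\ell} t$ in $\mathbf{U}$ for some $s$ with $f(s) = u$, and keeping, out of each world not in $f[\mathbf{U}]$, all transitions of $\mathbf{C}^K$; clearly $R^{\mathbf{H}K}_\ell \subseteq R^K_\ell$. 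The Henkin property is then verified by the usual induction on formulas, whose only nontrivial (``existence lemma'') step is: given $\langle\ell\rangle\varphi \in u$, produce a transition $u \derives{\ell} v$ in $\mathbf{HC}^K$ with $\varphi \in v$; at a world $u = f(s)$ this holds because $s \models \langle\ell\rangle\varphi$ forces a successor $t$ of $s$ with $t \models \varphi$, whence $u \derives{\ell} f(t)$ and $\varphi \in f(t)$, while at a world outside $f[\mathbf{U}]$ it is inherited from $\mathbf{C}^K$. Finally $Z = \{(s, f(s)) : s \in \mathbf{U}\}$ is a bisimulation between $\mathbf{U}$ and $\mathbf{HC}^K$: agreement on propositional variables follows from the definition of $V^K$; the forward clause holds since all $f$-induced transitions were kept; and the backward clause holds because if $u = f(s) \derives{\ell} f(t)$ was kept on account of some $s''$ with $f(s'') = u$ and $s'' \derives{\ell} t$, then $s \approx s''$, so the VHHM property of $\mathfrak{H}$ gives $s \bisim s''$, and hence $s$ has a successor $t'$ with $t' \bisim t$, so $f(t') = f(t)$. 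Consequently every world of $\mathbf{U}$ --- equivalently, every world of every structure in $\mathfrak{H}$ --- is bisimilar to a world of $\mathbf{HC}^K$, and the auxiliary characterization delivers $\mathfrak{H} \subseteq \mathsf{BS}(\mathbf{HC}^K)$.

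The crux, and the step I expect to be the main obstacle, is the backward clause in part~2: it is exactly where being a VHHM class is used essentially, since two \emph{merely modally equivalent} worlds of $\mathbf{U}$ can map to a single world of $\mathbf{HC}^K$, and without upgrading that equivalence to bisimilarity the relation $Z$ would fail the back condition. The remaining points are routine but require care: the (possibly proper-class) size of $\mathbf{U}$ is handled by passing to representatives, and one must check that pruning the transitions out of worlds in $f[\mathbf{U}]$ does not invalidate the existence-lemma step of the Henkin-property induction --- which it does not, precisely because $\mathrm{th}(s)$ records every diamond satisfied at $s$, each of which has a witnessing successor already present in $\mathbf{U}$.
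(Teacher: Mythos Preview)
The paper does not give its own proof of this theorem: it is quoted as a result of Visser and Hollenberg \cite{hollenberg_hennessy-milner_1995}, followed only by a one-paragraph intuition (``modal equivalence is a bisimulation relation over a single Henkin-like model'' and ``Henkin-like models effectively canonicalize different VHHM classes''). Your proposal is a correct, fully worked-out proof that is consistent with --- and indeed realizes --- that intuition. Your Part~1 argument is exactly the observation the paper sketches: the Henkin property forces modally equivalent worlds of $\mathbf{HC}^K$ to coincide, so modal equivalence collapses to identity there and hence to bisimilarity in $\mathsf{BS}(\mathbf{HC}^K)$; your maximality argument via $\mathrm{th}(s')$ is the standard one. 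Your Part~2 construction (pushing a VHHM class forward along the theory map and pruning $\mathbf{C}^K$ accordingly) is precisely the ``canonicalization'' the paper alludes to, and your identification of the back clause as the place where the VHHM hypothesis is essential is correct. The minor technical points you flag (passing to a set of representatives to form $\mathbf{U}$; checking that pruning does not break the existence lemma) are handled adequately.
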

		The basic idea behind Theorem \ref{thm:vhhm_maximal_classes} is this:  
		a set of models $\mathsf{BS}(\mathbf{HC}^K)$ is necessarily a VHHM 
		class because modal equivalence is a bisimulation relation over a 
		single Henkin-like model (each maximal set of formulas is satisfied 
		only by its own unique state in the model). Thus, Henkin-like models 
		effectively ``canonicalize'' different VHHM classes because a given 
		Henkin-like model associates a particular transition structure with 
		each and every (maximal) set of formulas that can be satisfied in any 
		Kripke structure ($K$ is sound and complete with respect to Kripke 
		structures).

		Maximal VHHM classes are related to the VHHM class of image finite 
		models in the following way.
		\begin{theorem}[Image Finite Kripke structures \cite{hollenberg_hennessy-milner_1995}]
			Each maximal VHHM class of Theorem \ref{thm:vhhm_maximal_classes} 
			contains every Kripke structure that is bisimilar to an image 
			finite Kripke structure. Hence, each maximal VHHM class contains 
			all image finite Kripke structures, and the class of image finite 
			Kripke structures is itself a VHHM class.
		\end{theorem}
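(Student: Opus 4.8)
The plan is to reduce the whole statement to a single claim: \emph{for every Henkin-like model $\mathbf{HC}^K$, every image-finite Kripke structure belongs to $\mathsf{BS}(\mathbf{HC}^K)$} (here ``image-finite'' means every state has only finitely many $R_\ell$-successors for each $\ell$). Granting this, the first sentence follows because, by Theorem~\ref{thm:vhhm_maximal_classes}, every maximal VHHM class is of the form $\mathsf{BS}(\mathbf{HC}^K)$ for some Henkin-like $\mathbf{HC}^K$, and $\mathsf{BS}(\mathbf{HC}^K)$ is by construction closed under bisimilarity, so it also contains every structure bisimilar to an image-finite one. For the last sentence, observe that the canonical model $\mathbf{C}^K$ is itself Henkin-like --- take $R^{\mathbf{H}K}_\ell = R^K_\ell$, which satisfies the Henkin property by Theorem~\ref{thm:henkin_property_canonical_model} --- so the image-finite Kripke structures form a \emph{sub}class of the VHHM class $\mathsf{BS}(\mathbf{C}^K)$; since the VHHM property of Definition~\ref{def:visser_hm_property} only ever quantifies over pairs of structures drawn from the class in question, any subclass of a VHHM class is again a VHHM class, and we are done.

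The technical core is a single lemma. Fix a Henkin-like model $\mathbf{HC}^K$ and an image-finite Kripke structure $\mathbf{S}$ with state set $S$, and for $s \in S$ let $\mathrm{Th}(s)$ denote the set of formulas true at $s$; soundness of $K$ makes $\mathrm{Th}(s)$ a $K$-maximal consistent set, hence a state of $\mathbf{HC}^K$. I will show that $Z = \setof{\tuple{s,\mathrm{Th}(s)}}{s \in S}$ is a bisimulation that relates every state of $\mathbf{S}$ to a state of the submodel $\mathbf{M}$ of $\mathbf{HC}^K$ carried by $\{\mathrm{Th}(s) : s \in S\}$, and conversely. Atomic agreement is immediate: $s \models \theta$ iff $\theta \in \mathrm{Th}(s)$ iff $\mathrm{Th}(s) \models \theta$, the last step by the Henkin property of $\mathbf{HC}^K$. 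Everything thus reduces to the two zig-zag clauses, and both are discharged by the same finite-conjunction argument --- the point at which image-finiteness is used.

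For the ``zag'' clause, suppose $\mathrm{Th}(s)\,R^{\mathbf{H}K}_\ell\,v$. Since $R^{\mathbf{H}K}_\ell \subseteq R^K_\ell$, every $\varphi \in v$ has $\langle \ell \rangle \varphi \in \mathrm{Th}(s)$, i.e.\ some $R_\ell$-successor of $s$ realizes $\varphi$; as image-finiteness leaves only finitely many such successors $s'_1,\dots,s'_n$, if $v$ were distinct from every $\mathrm{Th}(s'_i)$ one could choose, using maximality, $\psi_i \in v \setminus \mathrm{Th}(s'_i)$, and then $\psi_1 \wedge \dots \wedge \psi_n \in v$ would lie in no $\mathrm{Th}(s'_i)$, a contradiction. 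Hence $v = \mathrm{Th}(s'_i)$ with $s\,R_\ell\,s'_i$; in particular $\{\mathrm{Th}(s) : s\in S\}$ is closed under $\mathbf{HC}^K$-transitions, so $\mathbf{M}$ is a genuine generated submodel and $\mathbf{M} \in \mathsf{S}(\mathbf{HC}^K)$. For the ``zig'' clause, suppose $s\,R_\ell\,s'$. By the zag clause the $R^{\mathbf{H}K}_\ell$-successors of $\mathrm{Th}(s)$ form a finite set $W \subseteq \{\mathrm{Th}(s'_1),\dots,\mathrm{Th}(s'_n)\}$; if $\mathrm{Th}(s') \notin W$, pick $\chi_w \in \mathrm{Th}(s') \setminus w$ for each $w \in W$, set $\chi = \bigwedge_{w\in W}\chi_w$ (with $\chi = \top$ if $W = \emptyset$), and note $\chi \in \mathrm{Th}(s')$, so $\langle\ell\rangle\bigl(\chi\bigr) \in \mathrm{Th}(s)$; by the Henkin property of $\mathbf{HC}^K$, $\mathrm{Th}(s)$ then has an $R^{\mathbf{H}K}_\ell$-successor containing $\chi$, necessarily some $w^\ast \in W$ (which in particular forces $W \neq \emptyset$), and $w^\ast$ then contains $\chi_{w^\ast}$, a contradiction. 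Hence $\mathrm{Th}(s)\,R^{\mathbf{H}K}_\ell\,\mathrm{Th}(s')$, so $\mathrm{Th}(s')$ is the required partner for $s'$.

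I expect the main obstacle to be that the argument must reason about the \emph{pruned} relation $R^{\mathbf{H}K}_\ell$ rather than the full canonical $R^K_\ell$: the only handles available are the inclusion $R^{\mathbf{H}K}_\ell \subseteq R^K_\ell$ (which powers ``zag'') and the Henkin property of $\mathbf{HC}^K$ (which powers ``zig'', since in a Henkin-like model $\langle\ell\rangle\psi \in u$ genuinely forces an $R^{\mathbf{H}K}_\ell$-successor of $u$ satisfying $\psi$). Image-finiteness is precisely what upgrades these one-sided facts to the exact match $v = \mathrm{Th}(s'_i)$ via a single finite conjunction; without it one obtains only, per formula, \emph{some} successor realizing it, which cannot pin down a bisimulation partner --- the familiar reason HML fails to characterize bisimulation in general. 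The one bookkeeping subtlety is the degenerate case where $s$ has no $R_\ell$-successor, where $\langle\ell\rangle\top \notin \mathrm{Th}(s)$ forces $\mathrm{Th}(s)$ to have no $R^{\mathbf{H}K}_\ell$-successor either and both clauses hold vacuously.
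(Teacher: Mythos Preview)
The paper does not actually supply a proof of this statement: it is quoted in the background section as a known result from \cite{hollenberg_hennessy-milner_1995}, with no argument given. So there is no ``paper's own proof'' to compare against here.

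That said, your argument is correct and is essentially the standard one. The key point---that the map $s \mapsto \mathrm{Th}(s)$ yields a total bisimulation onto a successor-closed submodel of \emph{any} Henkin-like model---is exactly right, and you have correctly isolated the two places where the structure of $\mathbf{HC}^K$ is used: the inclusion $R^{\mathbf{H}K}_\ell \subseteq R^K_\ell$ for zag, and the Henkin property itself for zig. The finite-conjunction trick is precisely the image-finiteness input, and your handling of the degenerate no-successor case is fine. The reduction of the remaining claims (closure of $\mathsf{BS}(\mathbf{HC}^K)$ under bisimilarity; image-finite structures forming a VHHM class because subclasses of VHHM classes are VHHM classes) is also sound. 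One cosmetic remark: your phrase ``every maximal VHHM class is of the form $\mathsf{BS}(\mathbf{HC}^K)$'' is not literally what Theorem~\ref{thm:vhhm_maximal_classes} says, but it follows immediately by combining its two parts with maximality, so no harm done.
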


\section{Generalized Hennessy-Milner Logic} 
\label{sec:generalized_hennessy_milner_logic}
	In this section, our aim is to define a logic akin to HML but with GSTs as 
	the intended models. We proceed by first defining the syntax and then the 
	semantics of our logic.
	\subsection{HML for GSTs: Syntax} 
	\label{sub:hml_for_gsts_syntax}
	Our generalization of HML will be mostly recognizable, but the $\langle 
	\thickspace \rangle$ modality requires some significant modifications. In 
	particular, recall that in weak bisimulation for GSTs, transitions are 
	replaced by \emph{trajectories} (see Definition \ref{def:trajectory}) and 
	labels by functions over trajectories. To capture this notion, we
	generalize the way we \emph{label} the $\langle \; \rangle$ modality. 
	\begin{definition}[Domain of modalities]
	\label{def:domain_of_modalities}
	A domain of modalities is a totally ordered set, $(\mathcal{I}, 
	\preceq_\mathcal{I})$, together with a set of labels $L$.
	\end{definition}
	Intuitively, a domain of modalities will be used to define the 
	trajectory-like structures appearing in our modalities. However, we 
	eventually need such a domain of modalities to satisfy some additional 
	properties to ensure certain formulas exist. Hence, we provide the 
	following definitions. 

	\begin{definition}[Spanned by an interval]
		\label{def:span}
		Given a totally ordered set $\mathcal{I}$, we say a subset $I \subseteq 
		\mathcal{I}$ is \textbf{spanned by an interval}, if there exists an 
		interval $[i_0,i_1] = \{i \in \mathcal{I} : i_0 \preceq_\mathcal{I} i 
		\preceq_\mathcal{I} i_1\}$ such that $I \subseteq [i_0, i_1]$ and 
		$\{i_0,i_1\} \subset I$; this is equivalent to saying that $I$ contains 
		its least upper bound (LUB) and greatest lower bound (GLB). We say that 
		$i_0$ and $i_1$ are the \textbf{left and right endpoints of} $I$, 
		respectively, and they will be denoted by $\leftendpoint{I}$ and 
		$\rightendpoint{I}$, respectively. 
	\end{definition}
	\begin{definition}[Left-open subset]
		\label{def:left_open}
		A subset $I$ of a totally ordered set $\mathcal{I}$ is \textbf{left open} 
		if there exists a set $I^\prime \subseteq \mathcal{I}$ spanned by an 
		interval such that $I = I^\prime \backslash \{\leftendpoint{I^\prime}\}$
	\end{definition}
	\begin{definition}[Closed under left-open concatenation]
		\label{def:closed_under_spans}
		We say that a totally ordered set $\mathcal{I}$ is \textbf{closed under 
		left-open concatenation} if for any two left-open subsets $I_1, I_2 
		\subseteq \mathcal{I}$, there exists another left-open set $I_3$ such 
		that there is an order preserving bijection from $I_3$ to the totally 
		ordered set $I_1 ; I_2 = (\{1\} \times I_1 ) \cup ( \{2\} \times I_2 )$ 
		under the lexicographic ordering. A totally ordered set $\mathcal{I}$ 
		that is closed under left-open concatenation will be denoted 
		$\bar{\mathcal{I}}$.
	\end{definition}
	\begin{example}
	Any totally ordered set that can be embedded in an order-preserving 
	additive group structure is closed under left-open concatenation. 
	$\mathbb{N}$, $\mathbb{R}$ and $\mathbb{R} \times \mathbb{N}$ are examples.
	\end{example}
	\begin{remark}
		Henceforth, we will work exclusively with total orders that are closed 
		under left-open concatenation when we construct a domain of modalities.
	\end{remark}
	\begin{definition}[Modal execution]
		Let $(\bar{\mathcal{I}},L)$ be a domain of modalities. A \textbf{modal 
		execution} is a map from a left-open subset of $\bar{\mathcal{I}}$ to 
		the set of labels, $L$. The set of modal executions over 
		$(\bar{\mathcal{I}},L)$ will be denoted 
		$\mathcal{M}(\bar{\mathcal{I}},L)$.
	\end{definition}
	The notion of a modal execution is almost usable as a label for our 
	generalized diamond modalities, but it is too tied to the specific 
	\emph{domain} of the function in question. This will prove cumbersome in 
	the future, so we restrict ourselves to equivalence classes of such 
	modalities.

	\begin{definition}[Order Equivalent Modal Executions]
		\label{def:order_equiv_modal_executions}
		Let $E_1: I_1 \rightarrow L$ and $E_2 : I_2 \rightarrow L$ be two modal 
		executions from a domain of modalities $(\bar{\mathcal{I}},L)$. We say 
		that $E_1$ is \textbf{order equivalent} to $E_2$ if there exists an order 
		preserving bijection $\lambda : I_1 \rightarrow I_2$ such that $E_1(i) 
		= E_2(\lambda(i))$ for all $i \in I_1$. If $E_1$ is order equivalent to 
		$E_2$, then we write $E_1 \overset{\text{o.e}}{\sim} E_2$. This 
		definition parallels Definition \ref{def:order-equivalence} for GST 
		trajectories.
	\end{definition}
	\begin{theorem}[Order Equivalence is an equivalence relation]
		\label{thm:order_equiv_class}
		$\overset{\text{o.e.}}{\sim}$ is an equivalence relation between modal 
		executions. We denote by the equivalence class $\{ E^\prime \in 
		\mathcal{M}(\bar{\mathcal{I}},L) : E \overset{\text{o.e.}}{\sim} 
		E^\prime \}$ by $|E|$, and the set of such equivalence classes by 
		$|\mathcal{M}(\bar{\mathcal{I}},L)|$. 
	\end{theorem}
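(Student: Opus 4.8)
The plan is to verify directly that $\overset{\text{o.e.}}{\sim}$ enjoys the three defining properties of an equivalence relation, in each case exhibiting the required order preserving, label respecting bijection as the identity map, an inverse, or a composition.

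\emph{Reflexivity:} for a modal execution $E : I \to L$, the identity map $\mathrm{id}_I$ is an order preserving bijection and $E(i) = E(\mathrm{id}_I(i))$ for all $i \in I$, so $E \overset{\text{o.e.}}{\sim} E$. \emph{Symmetry:} if $E_1 : I_1 \to L$ and $E_2 : I_2 \to L$ satisfy $E_1 \overset{\text{o.e.}}{\sim} E_2$ via the order preserving bijection $\lambda : I_1 \to I_2$, then $\lambda^{-1} : I_2 \to I_1$ is a bijection, is order preserving (using that $I_1, I_2$ are totally ordered), and satisfies $E_2(j) = E_2(\lambda(\lambda^{-1}(j))) = E_1(\lambda^{-1}(j))$ for all $j \in I_2$; hence $E_2 \overset{\text{o.e.}}{\sim} E_1$. \emph{Transitivity:} if $E_1 \overset{\text{o.e.}}{\sim} E_2$ via $\lambda : I_1 \to I_2$ and $E_2 \overset{\text{o.e.}}{\sim} E_3$ via $\mu : I_2 \to I_3$, then $\mu \circ \lambda : I_1 \to I_3$ is an order preserving bijection with $E_3((\mu \circ \lambda)(i)) = E_2(\lambda(i)) = E_1(i)$ for all $i \in I_1$, so $E_1 \overset{\text{o.e.}}{\sim} E_3$. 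That $\overset{\text{o.e.}}{\sim}$ therefore partitions $\mathcal{M}(\bar{\mathcal{I}},L)$ into the classes $|E|$, making $|\mathcal{M}(\bar{\mathcal{I}},L)|$ well defined, is the standard consequence.

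The only step that is not purely formal — and hence the point I would treat with a little care — is the claim in the symmetry argument that the inverse of an order preserving bijection between totally ordered sets is itself order preserving. This is where totality of $\bar{\mathcal{I}}$, and therefore of each of its subsets, is genuinely used: for a bijection $\lambda$ between total orders, if $i \prec_\mathcal{I} i'$ but $\lambda(i') \prec_\mathcal{I} \lambda(i)$ (the only alternative to $\lambda(i) \prec_\mathcal{I} \lambda(i')$, by totality and injectivity), then order preservation of $\lambda$ would force $i' \prec_\mathcal{I} i$, a contradiction; so $\lambda$ is an order isomorphism and so is $\lambda^{-1}$. I would also note that this argument is the verbatim analogue of the routine check that order equivalence of GST trajectories, in the sense of Definition \ref{def:order-equivalence}, is an equivalence relation, so no genuinely new work is required here.
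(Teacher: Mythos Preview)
Your proof is correct: the direct verification of reflexivity, symmetry, and transitivity via the identity, inverse, and composition of order preserving bijections is exactly the routine argument one expects, and your care about the inverse of an order preserving bijection between total orders being order preserving is appropriate. The paper itself states this theorem without proof, evidently treating it as immediate, so your write-up simply supplies the omitted standard check.
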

	\begin{definition}[Set of Generalized HML (GHML) formulas]
	\label{def:ghml_formulas}
	Given a domain of modalities $(\bar{\mathcal{I}},L)$, the set of 
	\textbf{Generalized HML (GHML)} formulas is the set of formulas, 
	$\Phi_{\text{GHML}}(\bar{\mathcal{I}},L)$, inductively defined according to 
	the following rules: 
		\begin{equation}
			\varphi :=  \thickspace \top \quad | 
				\quad \neg \varphi \quad | 
				 \quad \varphi_1 \wedge \varphi_2 \quad | 
				 \quad \ghmldiamond{|E|} \varphi
		\end{equation}
		where $|E|$ is an equivalence class of modal executions over the domain 
		of modalities $(\bar{\mathcal{I}},L)$.
	\end{definition}
	The formal semantics of this logic will be presented in next subsection 
	within Definition \ref{def:ghml_semantics}.

	We have chosen to define our logic without propositional variables in order 
	to mirror Hennessy and Milner's original work. However, in Section 
	\ref{sec:maximal_hennessy_milner_classes_for_gsts} we will consider a modal 
	logic with a syntax based on Definition \ref{def:ghml_formulas}, and so we 
	describe here such a modal logic.
	\begin{definition}[GHML Modal Logic]
		\label{def:ghml_modal_logic}
		A GHML modal logic is a modal logic with all of the connectives from 
		Definition \ref{def:ghml_formulas} plus propositional variables. If 
		$\Theta$ is the set of propositional variables, then we denote the set 
		of these formulas by 
		$\Phi_{\text{GHML-}\Theta}(\bar{\mathcal{I}},L)$.
	\end{definition}
	A number of the proof theoretic results from Section 
	\ref{ssub:the_canonical_model_for_a_modal_logic} apply equally well to a 
	GHML modal logic: the definition of a logic (Definition \ref{def:logic}), 
	the definition of $\Lambda$-consistency and the definition of 
	$\Lambda$-maximality all apply directly to a GHML modal logic. On the other 
	hand, Lindenbaum's lemma (Lemma \ref{lemma:lindenbaums_lemma}) requires a 
	different proof because of the multiplicity of modalities. Nevertheless, it 
	is still true, as the following theorem asserts.
	\begin{theorem}[$\Lambda$-maximal sets of GHML formulas]
		\label{thm:maximal_consistent_ghml_sets}
		Let $\Lambda \subseteq 
		\Phi_{\text{GHML-}\Theta}(\bar{\mathcal{I}},L)$ be a logic, and let 
		$\Gamma \subseteq \Phi_{\text{GHML-}\Theta}(\bar{\mathcal{I}},L)$ 
		be a $\Lambda$-consistent set of formulas. Then there exists a 
		$\Lambda$-maximal set $\Gamma_0 \subseteq 
		\Phi_{\text{GHML-}\Theta}(\bar{\mathcal{I}},L)$ such that $\Gamma 
		\subseteq \Gamma_0$.
	\end{theorem}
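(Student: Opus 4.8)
The plan is to prove Lindenbaum's Lemma for GHML modal logics by transfinite recursion, mimicking the classical argument but taking care that the set of formulas $\Phi_{\text{GHML-}\Theta}(\bar{\mathcal{I}},L)$ need not be countable (since the index set $|\mathcal{M}(\bar{\mathcal{I}},L)|$ of modalities may be arbitrarily large). First I would fix a well-ordering $\langle \varphi_\alpha : \alpha < \kappa\rangle$ of all formulas in $\Phi_{\text{GHML-}\Theta}(\bar{\mathcal{I}},L)$, where $\kappa$ is the cardinality of that formula set; such a well-ordering exists by the Axiom of Choice. I would then build an increasing chain $\langle \Gamma_\alpha : \alpha \le \kappa \rangle$ of $\Lambda$-consistent sets, starting with $\Gamma_0 = \Gamma$, handling successor stages by the usual case split, and taking unions at limit stages.

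The key steps, in order, are as follows. (1) Preliminary facts about $\Lambda$-consistency that carry over verbatim from the propositional/modal case: a set $\Delta$ is $\Lambda$-consistent iff no finite subset of $\Delta$ proves $\bot$ (this is exactly the shape of the forbidden formula $\varphi_0 \rightarrow (\varphi_1 \rightarrow (\dots \rightarrow (\varphi_n \rightarrow \bot)\dots))$ in Definition \ref{def:consistent}); and for any $\Lambda$-consistent $\Delta$ and any formula $\psi$, at least one of $\Delta \cup \{\psi\}$ or $\Delta \cup \{\neg\psi\}$ is $\Lambda$-consistent. The proof of the latter is the standard one: if both failed, finite subsets witnessing inconsistency combine, using tautological reasoning and closure under modus ponens, to show $\Delta$ itself is inconsistent. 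Crucially, this argument uses only the ``logic'' axioms (tautologies plus modus ponens) from Definition \ref{def:logic} and makes no reference to the modalities, so the multiplicity of diamond operators is irrelevant here. (2) The recursion: given $\Gamma_\alpha$ $\Lambda$-consistent, set $\Gamma_{\alpha+1} = \Gamma_\alpha \cup \{\varphi_\alpha\}$ if that is $\Lambda$-consistent, and $\Gamma_{\alpha+1} = \Gamma_\alpha \cup \{\neg\varphi_\alpha\}$ otherwise; by step (1) the result is $\Lambda$-consistent. (3) Limit stages: for limit $\lambda$, set $\Gamma_\lambda = \bigcup_{\alpha < \lambda} \Gamma_\alpha$; this is $\Lambda$-consistent because any inconsistency would be witnessed by finitely many formulas, all of which lie in some single $\Gamma_\alpha$ with $\alpha < \lambda$ by the chain property, contradicting consistency of that $\Gamma_\alpha$. (4) Conclusion: $\Gamma_0^{\text{final}} := \Gamma_\kappa = \bigcup_{\alpha < \kappa}\Gamma_\alpha$ is $\Lambda$-consistent and, for every formula $\varphi_\alpha$, contains $\varphi_\alpha$ or $\neg\varphi_\alpha$ (it was decided at stage $\alpha+1$), hence is $\Lambda$-maximal in the sense of Definition \ref{def:lambda_maximal_set}, and it contains $\Gamma$.

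The main obstacle — really the only point where genuine care is needed beyond transcribing the classical proof — is the well-ordering/cardinality bookkeeping. In the original setting the set of formulas is countable, so one can index by $\mathbb{N}$ and the limit-stage argument is trivial; here the set of modalities, and hence the set of formulas, can have arbitrarily large cardinality, so one must invoke the Axiom of Choice to well-order the formulas and must explicitly handle limit ordinals. The verification that the limit union stays $\Lambda$-consistent is the place to be explicit: it hinges on the finitary character of $\Lambda$-consistency (Definition \ref{def:consistent} forbids only formulas built from finitely many members of $\Gamma$), which ensures any potential witness to inconsistency already appears at an earlier stage. Everything else — the tautology/modus ponens manipulations in step (1), and the bookkeeping that each formula gets decided — is routine and identical to the finite-modality case, precisely because consistency and maximality are defined purely propositionally and never mention the diamond modalities.
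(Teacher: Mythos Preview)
Your proof is correct. The paper's own proof is a one-line appeal to Zorn's Lemma: since $\Phi_{\text{GHML-}\Theta}(\bar{\mathcal{I}},L)$ is a set, the poset of $\Lambda$-consistent supersets of $\Gamma$ ordered by inclusion has upper bounds for chains (by the same finitary-consistency observation you isolate in step (3)), so Zorn yields a maximal element, which is then $\Lambda$-maximal. Your transfinite-recursion argument is the explicit, well-ordering-based variant of the same idea; both rest on the Axiom of Choice and on the finitary character of $\Lambda$-consistency, and neither needs anything about the modalities themselves. What your approach buys is a fully unpacked construction that makes the role of the possibly uncountable modality set visible and handles limit stages explicitly; what the paper's approach buys is brevity, since all of that bookkeeping is absorbed into the single invocation of Zorn.
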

	\begin{proof}
		Because the collection of GHML modal logic formulas is a set, this is a 
		straightforward application of Zorn's lemma.
	\end{proof}

	\subsection{HML for GSTs: Semantics} 
	\label{sub:hml_for_gsts_semantics}
	We define the semantics of GHML for a GST model $G$ in terms of the set of 
	the sub-GSTs of $G$; because each GST is itself defined in 
	terms of sets, we may soundly define the following notion of a sub-GST 
	rooted at a node.
	\begin{definition}[Sub-GST rooted at a node]
		Let $G = (P, \preceq, p_0, \mathcal{L})$ be a GST. We let $G |_{p}$ 
		denote the sub-GST of $G$ rooted at $p$, i.e. $G|_p \triangleq 
		(\{p^\prime \in P | p^\prime \succeq p \}, \preceq, p, \mathcal{L})$.
	\end{definition}
	Now we can formally define the semantics of the generalized HML formulas 
	defined above.
	\begin{definition}[Satisfaction relation over GHML formulas]
	\label{def:ghml_semantics}
		Let $G = (P,  \preceq, p_0, \mathcal{L})$ be a GST, and let 
		$\mathcal{G}_{\text{sub}} := \{ G|_p : p \in P \}$. A satisfaction 
		relation, $\models$, is a relation $\models \thickspace \subseteq 
		\thickspace \mathcal{G}_{\text{sub}} \times 
		\Phi_{\text{GHML}}(\bar{\mathcal{I}},L)$ that is defined inductively 
		over GHML formulas.  Satisfaction of the formula $\ghmldiamond{|E|} 
		\varphi$ is defined in the following way: $G \models 
		\ghmldiamond{|E|}\varphi$ if and only if there exists an interval 
		$(p_0, p]$; a left-open set $I \subset \bar{\mathcal{I}}$; and an order 
		preserving bijection $\lambda : I \rightarrow (p_0, p]$ such that 
		\begin{enumerate}
			\item $\mathcal{L} \circ \lambda \in |E|$

			\item $G|_p \models \varphi$.
		\end{enumerate}
		The satisfaction relation is defined for other formulas in the usual 
		way.
	\end{definition}

	Intuitively, a GST satisfies the formula $\ghmldiamond{|E|} \top$ when it 
	has a trajectory emanating from its root that is order equivalent to every 
	$E \in |E| $ (recall that all elements of $|E|$  are order equivalent to 
	each other). Importantly, this logic also yields formulas that are 
	analogous to HML formulas on discrete GSTs when there are at least two 
	points in $\bar{\mathcal{I}}$. In particular, if $i_0 \preceq i_1$, then 
	$\{i_0, i_1\}$ is spanned by the interval $[i_0,i_1]$, and the singleton 
	point $\{i_1\}$ is a left-open set. Thus, 
	$\mathcal{M}(\bar{\mathcal{I}},L)$ contains modal executions that are 
	order-equivalent to discrete transitions in a GST. Of course, discrete 
	transitions are the essence of the semantics for the labeled modalities in 
	HML. 

\section{A First Hennessy-Milner Theorem: ``Image-finite'' GSTs} 
\label{sec:weak_bisimulation_preserves_ghml_modal_equivalence}
	In this section, our objective is to define something like a class of 
	image-finite GSTs with the ultimate intention of defining a Hennessy-Milner 
	class of GSTs. We introduce this section with an example to show that the 
	most straightforward definition of image-finiteness is too exclusive to be 
	of much interest.
	\begin{example}
		\label{ex:strict_image_finite_gsts}
		Consider the following GST defined on the unit interval $[0,1] \subset 
		\mathbb{R}$: $G_{[0,1]} := ([0,1],\leq_\mathbb{R},0,(0,1] \rightarrow 
		\{\alpha\})$.
	\end{example}
	The point of Example \ref{ex:strict_image_finite_gsts} is that $G_{[0,1]}$ 
	has uncountably many nodes that are accessible from the root, $0$, with a 
	single trajectory: that is for any $x,y \in (0,1]$ there is an order 
	preserving bijection between $(0,x]$ and $(0,y]$. Since these trajectories' 
	nodes are labeled by a single label, $\alpha$, they are thus order 
	equivalent in the sense of Definition \ref{def:order-equivalence}. 
	Nevertheless, this GST appears to be about as simple as one could wish for 
	in terms of nondeterminism: there is essentially no branching behavior at 
	all.

	\subsection{GSTs as Discrete Structures} 
	\label{sub:gsts_as_discrete_structures}
	The discussion following Example \ref{ex:strict_image_finite_gsts} suggests 
	a way of looking at GSTs that will be profitable, especially when it comes 
	to examining GHML formulas and constructing Hennessy-Milner classes. In 
	particular, we use equivalence classes of modal executions to label 
	discrete transitions on a Kripke structure; we show that such a 
	construction captures the relevant structure of a given class of GSTs with 
	respect to bisimulation and GHML satisfaction.
	\begin{definition}[Captured by a Domain of modalities]
	\label{def:captured}
		Let $\mathcal{U}$ be a set of GSTs. We say that $\mathcal{U}$ is 
		\textbf{captured} by a domain of modalities $(\bar{\mathcal{I}},L)$ if 
		every trajectory from every GST in $\mathcal{U}$ is order equivalent to 
		some modal execution over $(\bar{\mathcal{I}},L)$. 
	\end{definition} 
	\begin{definition}[Surrogate Kripke Structure]
		\label{def:surrogate_lts}
		Let $\mathcal{U}$ be a set of GSTs that is captured by a domain of 
		modalities $(\bar{\mathcal{I}},L)$. For any GST $G = 
		(P,\preceq_P,p_0,\mathcal{L})$ in $\mathcal{U}$, we define a 
		\textbf{surrogate Kripke structure}, $\mathbf{G} = (P, \{R^G_{|E|} 
		\subseteq P \times P : |E| \in \mathcal{M}(\bar{\mathcal{I}},L) \}, 
		V)$, as follows:
		\begin{itemize}
			\item the set of states is $P$; and 

			\item $p_1 \overset{|E|}{\rightarrow} p_2$ -- i.e. $p_1 
				R^G_{|E|} p_2$ -- if and only if $p_1 \preceq_P p_2$ and 
				$(p_1,p_2]$ is order equivalent to an element of $|E|$; and

			\item $V : \Theta \rightarrow \{ P \}$ indicates all 
				propositional variables are true in all states in $P$.
		\end{itemize}
	\end{definition}
	\begin{remark}
		We will not consider valuations in this section, but they will be used 
		in the next section. Thus, for the purposes of this section, we may 
		regard surrogate Kripke structures as labeled transition systems.
	\end{remark}
	\begin{example}[Surrogate Kripke Structure for $G_{[0,1]}$]
		\label{ex:first_surrogate_ks}
		If we let $\bar{\mathcal{I}} = (\mathbb{R},\leq_\mathbb{R})$ and $L = 
		\{\alpha\}$, then Figure \ref{fig:surrogate_ks} shows some of the 
		transitions that appear in the surrogate Kripke structure for the GST 
		$G_{[0,1]}$ from Example \ref{ex:strict_image_finite_gsts}.
	\end{example}
	\begin{figure}
			\centering
			\includegraphics[width=0.5\textwidth]{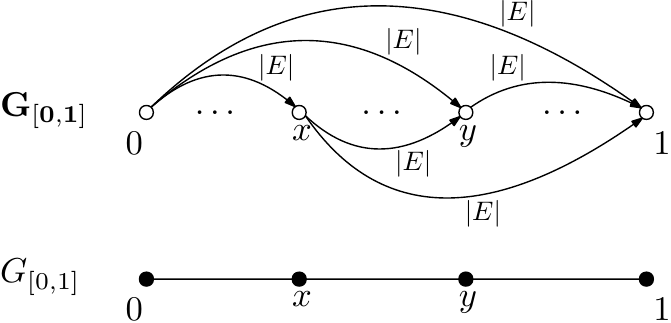}
			\caption{GST and surrogate Kripke structure from Example \ref{ex:first_surrogate_ks}; we define $E : (0,1] \rightarrow \{\alpha\}$.}
			\label{fig:surrogate_ks}
	\end{figure}

	The idea of a surrogate Kripke structure seems simple enough, but its 
	importance is indicated by the following two theorems: one relates GHML 
	formulas to HML formulas, and the other relates ordinary bisimulation to 
	weak bisimulation for GSTs.

	\begin{theorem}[Relating GHML formulas on $G$ to HML formulas on $\mathbf{G}$]
		\label{thm:ghml_hml_surrogates}
		Let $(\bar{\mathcal{I}},L)$ be a domain of modalities, and let 
		$\mathcal{U}$ be a set of GSTs captured by $(\bar{\mathcal{I}},L)$. 
		Furthermore, consider  HML over the set of labels given by 
		$|\mathcal{M}(\bar{\mathcal{I}},L)|$. Then for every $G = (P, 
		\preceq_P, p_0,\mathcal{L}) \in \mathcal{U}$, 
		\begin{enumerate}
			\item for all $\varphi \in \Phi_{\text{GHML}}(\bar{\mathcal{I}},L)$, 
				$G \models \varphi ~ \Rightarrow ~ p_0 \models \varphi_{\langle 
				\rangle }$ and

			\item for all $\phi \in 
				\Phi_{\text{HML}}(|\mathcal{M}(\bar{\mathcal{I}},L)|)$, $p_0 
				\models \phi ~ \Rightarrow ~ G \models \phi_{\ghmldiamond{}}$. 
		\end{enumerate}
		The notation $\varphi_{\langle\rangle}$ indicates that the GHML formula 
		$\varphi$ is converted to an HML formula by replacing each 
		$\ghmldiamond{|E|}$ modality with the corresponding HML modality 
		$\langle |E| \rangle$. $\phi_{\ghmldiamond{}}$ indicates an analogous 
		conversion from an HML formula to a GHML formula.
	\end{theorem}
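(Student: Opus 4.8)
The proof is a structural induction on the GHML formula $\varphi$ (equivalently on the HML formula $\phi$), with the key bookkeeping being that the inductive hypothesis must be stated not just for the roots $p_0$ but for every node $p \in P$, comparing $G|_p \models \psi$ with $p \models \psi_{\langle\rangle}$ in the surrogate $\mathbf{G}$. So first I would restate both claims in the stronger ``for all $p \in P$'' form: for every $p$ and every $\psi \in \Phi_{\text{GHML}}(\bar{\mathcal{I}},L)$, $G|_p \models \psi \Rightarrow p \models \psi_{\langle\rangle}$, and for every $\chi \in \Phi_{\text{HML}}(|\mathcal{M}(\bar{\mathcal{I}},L)|)$, $p \models \chi \Rightarrow G|_p \models \chi_{\ghmldiamond{}}$. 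Taking $p = p_0$ then recovers the theorem. (In fact, since negation is in the language, each direction's ``$\Rightarrow$'' together with its dual gives ``$\Leftrightarrow$'', so one really only needs a single biconditional inductive claim; I would point this out to halve the case analysis.)

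**Key steps.** The base case $\psi = \top$ is immediate since both satisfaction relations make $\top$ universally true. The Boolean cases $\neg\psi$ and $\psi_1 \wedge \psi_2$ are routine: the conversion $(\cdot)_{\langle\rangle}$ commutes with $\neg$ and $\wedge$ by definition, and the semantics of $\neg$ and $\wedge$ are identical in the two models, so the inductive hypothesis transfers directly. The only substantive case is the modality $\psi = \ghmldiamond{|E|}\varphi'$, whose HML image is $\langle |E| \rangle \varphi'_{\langle\rangle}$. Unwinding Definition~\ref{def:ghml_semantics}: $G|_p \models \ghmldiamond{|E|}\varphi'$ iff there is a node $p' \succ p$ with $(p,p']$ a bounded trajectory, a left-open $I \subset \bar{\mathcal{I}}$, and an order-preserving bijection $\lambda : I \to (p,p']$ with $\mathcal{L} \circ \lambda \in |E|$ and $G|_{p'} \models \varphi'$. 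Unwinding Definition~\ref{def:semantics_of_hml} together with Definition~\ref{def:surrogate_lts}: $p \models \langle |E|\rangle \varphi'_{\langle\rangle}$ in $\mathbf{G}$ iff there is $p'$ with $p \, R^G_{|E|}\, p'$ — i.e. $p \preceq_P p'$ and $(p,p']$ order-equivalent to an element of $|E|$ — and $p' \models \varphi'_{\langle\rangle}$. These two conditions are literally the same condition on $(p,p']$: ``$(p,p']$ is order-equivalent to a member of $|E|$'' is by definition the existence of the left-open $I$ and the order-preserving bijection $\lambda$ with $\mathcal{L}\circ\lambda \in |E|$ (here one uses that $\mathcal{U}$ is \emph{captured} by $(\bar{\mathcal{I}},L)$, which guarantees such an $I$ exists whenever $(p,p']$ matches $|E|$, and uses Definition~\ref{def:order_equiv_modal_executions} to pass between ``$\mathcal{L}\circ\lambda \in |E|$'' and ``order-equivalent to a member of $|E|$''). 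The residual obligation $G|_{p'} \models \varphi'$ versus $p' \models \varphi'_{\langle\rangle}$ is exactly the inductive hypothesis applied at the node $p'$, which lies in the same GST $G$ (note $G|_{p'} = (G|_p)|_{p'}$, so the sub-GST structure is stable under this operation). For the second claim one runs the identical equivalence in the reverse direction, converting an HML modality $\langle |E| \rangle$ back to $\ghmldiamond{|E|}$.

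**Main obstacle.** There is no deep obstacle; the content is almost entirely definitional unfolding. The one place to be careful is the correspondence between the ``spanned/left-open/order-preserving bijection'' machinery of Definitions~\ref{def:span}--\ref{def:order_equiv_modal_executions} and the ``order equivalent to an element of $|E|$'' phrasing in the surrogate construction — one must check that these are genuinely the same relation on bounded trajectories $(p,p']$, which is where the hypothesis that $\mathcal{U}$ is captured by $(\bar{\mathcal{I}},L)$ is essential (without it, a trajectory $(p,p']$ could fail to be order-equivalent to \emph{any} modal execution, and the surrogate would not faithfully record it). I would also remark that only \emph{bounded} trajectories enter, since the $\ghmldiamond{|E|}$ semantics quantify over intervals $(p_0,p]$, matching the bounded-trajectory clause of Definition~\ref{def:trajectory}; unbounded trajectories play no role here. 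Finally, a one-line remark that the converse implications (and hence the biconditionals) follow from the negation case closes the argument.
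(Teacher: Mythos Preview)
Your proposal is correct and follows essentially the same approach as the paper: a straightforward structural induction on formulas, with the modality case handled by the definitional match between the GHML semantics and the transition relation of the surrogate Kripke structure (the paper specifically highlights that labeling transitions by \emph{equivalence classes} of modal executions is what makes this match exact). Your version is considerably more detailed---in particular the explicit strengthening to all nodes $p \in P$, the remark that negation upgrades each implication to a biconditional, and the identification of where the ``captured'' hypothesis is used---but the underlying argument is the same.
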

	\begin{proof}
		This is a straightforward proof by induction on formula structure (the 
		base case is $\top$, which has an identical meaning in HML and GHML). 
		The ability to match HML modalities to GHML modalities (and conversely) 
		is assured by the way we have constructed the surrogate Kripke 
		structure, and in particular, the fact that we have labeled 
		trajectories by \emph{equivalence classes} of modal executions.
	\end{proof}
	\begin{theorem}[Weak bisimulation between GSTs and bisimulation between surrogates]
		\label{thm:bisim_weak_bisim_surrogates}
		Let $\mathcal{U}$ and $(\bar{\mathcal{I}},L)$ be as in Theorem 
		\ref{thm:ghml_hml_surrogates}. Furthermore, let $G_1 = 
		(P,\preceq_P,p_0,\mathcal{L}_P)$ and $G_2 = (Q, \preceq_Q, q_0, 
		\mathcal{L}_Q)$ be two GSTs in $\mathcal{U}$. Then
		\begin{equation}
			G_1 \; \bisim_w \; G_2 \quad \Longleftrightarrow \quad p_0 \; \bisim \; q_0.
		\end{equation}
		where the bisimulation $p_0 \; \bisim \; q_0$ is taken in the context 
		of the surrogate Kripke structures $\mathbf{G_1}$ and $\mathbf{G_2}$.
	\end{theorem}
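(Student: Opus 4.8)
The plan is to prove the two implications separately by exhibiting bisimulation relations directly, exploiting the fact that the surrogate Kripke structures $\mathbf{G_1}$ and $\mathbf{G_2}$ have the \emph{same} state sets $P$ and $Q$ as the underlying GSTs $G_1$ and $G_2$, and the same transition ``reachability'' up to the labeling convention. The key observation driving both directions is that a weak simulation condition for GSTs (Definition \ref{def:weak-simulation}) and an ordinary simulation condition on the surrogates are essentially the same requirement stated in two vocabularies: ``for $p' \succeq p$ there is $q' \succeq q$ with $(p,p']$ and $(q,q']$ order-equivalent'' is exactly ``for every $|E|$ with $p \overset{|E|}{\rightarrow} p'$ there is $q'$ with $q \overset{|E|}{\rightarrow} q'$'', because $p \overset{|E|}{\rightarrow} p'$ holds (Definition \ref{def:surrogate_lts}) iff $p \preceq_P p'$ and $(p,p']$ is order-equivalent to an element of $|E|$, and order-equivalence of trajectories is transitive through the common representative (Theorem \ref{thm:order_equiv_class}).

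First I would prove the left-to-right direction. Suppose $R \subseteq P \times Q$ is a weak simulation from $G_1$ to $G_2$ with $\tuple{p_0,q_0} \in R$ (and symmetrically a weak simulation the other way, giving weak bisimilarity). I claim the very same relation $R$ is a bisimulation between $\mathbf{G_1}$ and $\mathbf{G_2}$. Take $\tuple{p,q} \in R$ and a surrogate transition $p \overset{|E|}{\rightarrow} p'$. By definition of the surrogate, $p \preceq_P p'$; applying the weak simulation clause to $p' \succeq p$ yields $q' \succeq q$ with $\tuple{p',q'} \in R$ and $(p,p']$ order-equivalent to $(q,q']$. Since $(p,p']$ is order-equivalent to some $E \in |E|$ and order-equivalence is transitive and symmetric, $(q,q']$ is also order-equivalent to an element of $|E|$, hence $q \overset{|E|}{\rightarrow} q'$ in $\mathbf{G_2}$. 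The symmetric matching is handled by the reverse weak simulation. The valuation condition of Definition \ref{def:bisim_kripke_structures} is vacuous here since $V$ makes every propositional variable true everywhere (the remark after Definition \ref{def:surrogate_lts}). Thus $p_0 \; \bisim \; q_0$ in the surrogates.

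For the converse, suppose $\sim$ is a bisimulation between $\mathbf{G_1}$ and $\mathbf{G_2}$ with $p_0 \sim q_0$. I would show $\sim$, viewed as a relation on $P \times Q$, is a weak simulation from $G_1$ to $G_2$ (and the symmetric relation is a weak simulation back). Given $\tuple{p,q} \in {\sim}$ and $p' \succeq p$, the trajectory $(p,p']$ is order-equivalent to some modal execution over $(\bar{\mathcal{I}},L)$ because $\mathcal{U}$ is captured by $(\bar{\mathcal{I}},L)$ (Definition \ref{def:captured}); let $|E|$ be its order-equivalence class. Then $p \overset{|E|}{\rightarrow} p'$ in $\mathbf{G_1}$, so by the bisimulation there is $q'$ with $q \overset{|E|}{\rightarrow} q'$ and $p' \sim q'$; unwinding the surrogate definition, $q \preceq_Q q'$ and $(q,q']$ is order-equivalent to an element of $|E|$, hence order-equivalent to $(p,p']$. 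This is precisely the weak simulation clause. Both this and its symmetric version give $G_1 \; \bisim_w \; G_2$.

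The main obstacle, and the one place requiring care rather than bookkeeping, is the appeal to \emph{capturedness} in the converse direction: one must be sure that every trajectory $(p,p']$ arising in $G_1$ (or $G_2$) is order-equivalent to \emph{some} modal execution, so that a surrogate transition actually exists to feed into the bisimulation hypothesis — this is exactly what Definition \ref{def:captured} guarantees, and it is why the hypothesis $\mathcal{U}$ captured by $(\bar{\mathcal{I}},L)$ is needed. A secondary subtlety is that trajectories in a GST are only bounded trajectories of the form $(p,p']$ when we restrict to the first clause of Definition \ref{def:trajectory}; since weak simulation (Definition \ref{def:weak-simulation}) only ever refers to bounded trajectories $(p,p']$ and $(q,q']$, and the surrogate transition relation is likewise defined only in terms of such intervals, the unbounded trajectories play no role here and no case analysis over them is required.
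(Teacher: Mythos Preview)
Your proposal is correct and follows exactly the approach the paper takes: the paper's proof consists of the single observation that ``any weak bisimulation relation between $G_1$ and $G_2$ is a bisimulation relation between $\mathbf{G_1}$ and $\mathbf{G_2}$ and conversely,'' which is precisely what you spell out in detail by showing the same relation $R$ (resp.\ $\sim$) witnesses both notions. Your explicit invocation of capturedness for the converse direction and the remark that the valuation condition is vacuous are the right fine points to record; the paper leaves these implicit.
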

	\begin{proof}
		This theorem, like Theorem \ref{thm:ghml_hml_surrogates}, is a 
		consequence of the way that we defined the surrogate Kripke structure: 
		in particular, any weak bisimulation relation between $G_1$ and $G_2$ 
		is a bisimulation relation between $\mathbf{G_1}$ and $\mathbf{G_2}$ 
		and conversely.
	\end{proof}
	Theorems \ref{thm:ghml_hml_surrogates} and 
	\ref{thm:bisim_weak_bisim_surrogates} together reinforce that weak 
	bisimulation is very much a discrete notion. In the context of GHML 
	formulas and the construction of Hennessy-Milner classes, though, this will 
	prove to be an advantage.

	\subsection{``Image-Finite'' GSTs} 
	\label{sub:image_finite_gsts}
	If we reconsider Example \ref{ex:strict_image_finite_gsts} in the context 
	of Theorems \ref{thm:ghml_hml_surrogates} and 
	\ref{thm:bisim_weak_bisim_surrogates}, then a natural means of defining 
	``image-finite'' GSTs emerges. In particular, it is evident that the 
	surrogate Kripke structure $\mathbf{G}_{[0,1]}$ (see Figure 
	\ref{fig:surrogate_ks}) is bisimilar to the two state Kripke structure 
	$\mathbf{M}$ depicted in Figure \ref{fig:image_finite_gst_example}.
	\begin{figure}
			\centering
			\hspace*{-1.5cm}\includegraphics[width=0.3\textwidth]{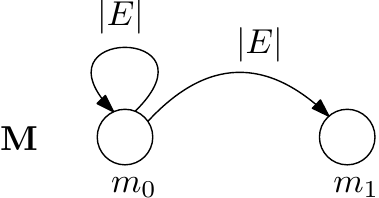}
			\caption{A Kripke structure $\mathbf{M}$ that is bisimilar to $\mathbf{G_{[0,1]}}$; again, we define $E : (0,1] \rightarrow \{\alpha\}$.}
			\label{fig:image_finite_gst_example}
	\end{figure}
	Of course $\mathbf{M}$ \emph{is} image-finite, and so we have just 
	exemplified a serviceable means by which we can define image-finiteness for 
	GSTs.
	\begin{definition}[Image-finite GST]
		\label{def:image_finite_gst}
		Let $\mathcal{U}$ and $(\bar{\mathcal{I}},L)$ be as before. Then a GST 
		$G \in \mathcal{U}$ is \textbf{image finite} if its surrogate 
		Kripke structure is $\bisim_K$ to an image-finite Kripke 
		structure.
	\end{definition}
		In Definition \ref{def:image_finite_gst}, we really mean bisimulation 
		according to $\bisim_K$ of Definition 
		\ref{def:bisim_kripke_structures}, so that the surrogate Kripke 
		structure ends up being bisimilar to a Kripke structure that also has a 
		``universal'' valuation. Hence, the surrogate Kripke structure and its 
		image-finite pair can be regarded simply as labeled transition systems 
		with the usual notion of bisimulation $\bisim$. We introduce this 
		requirement in preparation for the treatment of maximal classes to come.

	Of course because of Theorem \ref{thm:ghml_hml_surrogates} and 
	\ref{thm:bisim_weak_bisim_surrogates}, this definition of image-finiteness 
	implies a Hennessy-Milner class of GSTs through the use of Hennessy and 
	Milner's original theorem.
	\begin{theorem}[Image-finite GSTs form a Hennessy-Milner class]
		\label{thm:image_finite_gst_hm_class}
		Let $\mathcal{U}$ and $(\bar{\mathcal{I}},L)$ be as before. Then the 
		set of image-finite GSTs in $\mathcal{U}$ forms a Hennessy-Milner class 
		according to weak bisimulation. That is any two image finite GSTs from 
		$\mathcal{U}$ are weakly bisimilar if and only if they satisfy the same 
		GHML formulas.
	\end{theorem}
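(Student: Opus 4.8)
The plan is to reduce the statement entirely to the discrete case already available, namely the Image-Finite Hennessy-Milner Theorem (Theorem~\ref{thm:hm_theorem_image_finite_sts}), by routing everything through the surrogate Kripke structures and invoking Theorems~\ref{thm:ghml_hml_surrogates} and~\ref{thm:bisim_weak_bisim_surrogates}. Let me spell out what I would assemble. Fix $\mathcal{U}$ captured by $(\bar{\mathcal{I}},L)$ and let $G_1 = (P,\preceq_P,p_0,\mathcal{L}_P)$ and $G_2 = (Q,\preceq_Q,q_0,\mathcal{L}_Q)$ be two image-finite GSTs in $\mathcal{U}$, with surrogate Kripke structures $\mathbf{G_1}$ and $\mathbf{G_2}$. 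Regard these as labeled transition systems over the label set $|\mathcal{M}(\bar{\mathcal{I}},L)|$ (as the remark after Definition~\ref{def:surrogate_lts} licenses). By Definition~\ref{def:image_finite_gst}, each $\mathbf{G_i}$ is bisimilar (via $\bisim$, since valuations are universal on both sides) to some image-finite Kripke structure $\mathbf{M_i}$, with $p_0 \bisim m_1^0$ and $q_0 \bisim m_2^0$ for the relevant roots/states.

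First I would establish the forward direction. Suppose $G_1 \bisim_w G_2$. By Theorem~\ref{thm:bisim_weak_bisim_surrogates}, $p_0 \bisim q_0$ in the surrogate setting. Bisimilarity is preserved under the semantics of HML in general (this is the ``easy'' half of the Hennessy-Milner correspondence and holds for all STs/LTSs, not just image-finite ones), so $p_0 \approx_{\text{HML}} q_0$ over the label set $|\mathcal{M}(\bar{\mathcal{I}},L)|$. Now I would transfer this back to GHML using Theorem~\ref{thm:ghml_hml_surrogates}: for any $\varphi \in \Phi_{\text{GHML}}(\bar{\mathcal{I}},L)$, part~1 gives $G_1 \models \varphi \Rightarrow p_0 \models \varphi_{\langle\rangle}$, hence $q_0 \models \varphi_{\langle\rangle}$ by modal equivalence of $p_0,q_0$; then part~2 gives $q_0 \models (\varphi_{\langle\rangle})_{\ghmldiamond{}} = \varphi \Rightarrow G_2 \models \varphi$ (using that the two conversions are mutually inverse on the syntax). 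Symmetrically $G_2 \models \varphi \Rightarrow G_1 \models \varphi$, so $G_1 \approx_{\text{GHML}} G_2$.

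For the converse, suppose $G_1 \approx_{\text{GHML}} G_2$. Running the same two parts of Theorem~\ref{thm:ghml_hml_surrogates} in the opposite direction yields $p_0 \approx_{\text{HML}} q_0$ over $|\mathcal{M}(\bar{\mathcal{I}},L)|$: given an HML formula $\phi$, part~2 lifts $p_0 \models \phi$ to $G_1 \models \phi_{\ghmldiamond{}}$, GHML modal equivalence transfers this to $G_2 \models \phi_{\ghmldiamond{}}$, and part~1 pushes it down to $q_0 \models (\phi_{\ghmldiamond{}})_{\langle\rangle} = \phi$. Here is where image-finiteness is essential: since $\mathbf{G_i} \bisim \mathbf{M_i}$ with $\mathbf{M_i}$ image-finite, and bisimilar states are HML-equivalent, we get $m_1^0 \approx_{\text{HML}} p_0 \approx_{\text{HML}} q_0 \approx_{\text{HML}} m_2^0$, i.e.\ $m_1^0 \approx_{\text{HML}} m_2^0$ between two image-finite STs. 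Theorem~\ref{thm:hm_theorem_image_finite_sts} then gives $m_1^0 \bisim m_2^0$, whence $p_0 \bisim m_1^0 \bisim m_2^0 \bisim q_0$ by transitivity of $\bisim$. Finally Theorem~\ref{thm:bisim_weak_bisim_surrogates} converts $p_0 \bisim q_0$ back to $G_1 \bisim_w G_2$.

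The main obstacle, and the only place where anything beyond bookkeeping is needed, is the passage ``$p_0 \approx_{\text{HML}} q_0$ with $\mathbf{G_i}$ merely \emph{bisimilar} to an image-finite structure'' $\Rightarrow$ ``the relevant image-finite STs are HML-equivalent.'' This relies on two standard facts about plain HML over LTSs that are not proved in the excerpt but are uncontroversial: (i) bisimilar states satisfy exactly the same HML formulas (used on $\mathbf{G_i} \bisim \mathbf{M_i}$), and (ii) HML-equivalence composed along bisimilarity is transitive. I would cite \cite{hennessy_algebraic_1985} or \cite{milner_handbook_1990} for (i). Everything else is a mechanical chase through the two surrogate-correspondence theorems plus one application of the classical Hennessy-Milner theorem; there are no case analyses and no induction beyond what is already absorbed into Theorem~\ref{thm:ghml_hml_surrogates}.
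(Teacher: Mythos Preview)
Your proposal is correct and follows exactly the route the paper intends: it states the theorem immediately after remarking that ``because of Theorem~\ref{thm:ghml_hml_surrogates} and~\ref{thm:bisim_weak_bisim_surrogates}, this definition of image-finiteness implies a Hennessy-Milner class of GSTs through the use of Hennessy and Milner's original theorem,'' and gives no further proof. Your write-up simply unpacks that sentence, including the one nontrivial step of passing through the image-finite witnesses $\mathbf{M_i}$ via the standard fact that bisimilar states are HML-equivalent.
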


\section{Maximal Hennessy-Milner Classes for GSTs} 
\label{sec:maximal_hennessy_milner_classes_for_gsts}
	The construction of surrogate Kripke structures in Definition 
	\ref{def:surrogate_lts} combined with Theorems 
	\ref{thm:ghml_hml_surrogates} and \ref{thm:bisim_weak_bisim_surrogates} 
	suggests that the maximal VHHM classes of Section 
	\ref{ssub:hennessy_milner_classes_for_kripke_structures} have analogs as 
	maximal HM classes of GSTs with respect to weak bisimulation. In this 
	section we demonstrate that this is indeed the case, although the 
	translation is not exact. We also exhibit some interesting GST-specific 
	properties that these classes possess.

	\subsection{Characterizing Maximal VHHM Classes of GSTs} 
	\label{sub:maximal_vhhm_classes_of_gsts}
	The essential assumption required for the proof of Theorem 
	\ref{thm:vhhm_maximal_classes} is the VHHM property: that is a VHHM class 
	of Kripke structures cannot contain two states that satisfy the same 
	formulas yet are not bisimilar. Since we are interested in weak 
	bisimulation and GHML formulas, we can straightforwardly define a VHHM 
	property for GSTs as follows.
	\begin{definition}[VHHM class for GSTs]
		\label{def:vhhm_class_gsts}
		Let $\mathcal{U}$ be a set of GSTs, and let $(\bar{\mathcal{I}},L)$ be 
		a domain of modalities that captures $\mathcal{U}$, so that 
		$\approx_{\text{GHML}}$ is interpreted with respect to 
		$\Phi_{\text{GHML}-\Theta}(\bar{\mathcal{I}},L)$. Then we say that a 
		subset $\mathfrak{h} \subseteq \mathcal{U}$ \textbf{satisfies the VHHM 
		property for GSTs} if for any two sub-GSTs $G_1|_p$ and $G_2|_q$ from 
		the set $\mathfrak{h}$ (possibly with $G_1 = G_2$),
		\begin{equation}
			G_1|_p \;\; \bisim_w \; G_2|_q \quad \Longleftrightarrow \quad G_1|_p \; \approx_{\text{GHML}} \; G_2|_q.
		\end{equation}
	\end{definition}

	Of course this definition will help us define maximal VHHM classes of GSTs 
	because of Theorem \ref{thm:ghml_hml_surrogates} and 
	\ref{thm:bisim_weak_bisim_surrogates}, which relate GHML formulas and weak 
	bisimulation for GSTs to HML formulas and bisimulation for Kripke 
	structures. Hence, we have the following theorem.
	\begin{theorem}[Maximal VHHM classes for GSTs are restrained by maximal VHHM classes for their surrogates]
		\label{thm:gst_vhhm_property_to_kripke_vhhm_property}
		Let $\mathcal{U}$ and $(\bar{\mathcal{I}},L)$ be as in Definition 
		\ref{def:vhhm_class_gsts}. If $\mathfrak{h} \subseteq \mathcal{U}$ is a 
		VHHM class of GSTs, then the set of surrogate Kripke structures 
		$\{\mathbf{G}: G \in \mathfrak{h}\}$ satisfies the VHHM property of 
		Definition \ref{def:vh_hm_class} with respect to 
		$\Phi_{\text{HML}}(|\mathcal{M}(\bar{\mathcal{I}},L)|)$.
	\end{theorem}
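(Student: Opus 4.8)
The plan is to transfer the VHHM property from the GST side to the Kripke-structure side by composing the two translation results already established: Theorem~\ref{thm:bisim_weak_bisim_surrogates} (weak bisimulation of GSTs corresponds to bisimulation of surrogates) and Theorem~\ref{thm:ghml_hml_surrogates} (GHML satisfaction on $G$ corresponds to HML satisfaction on $\mathbf{G}$, via the $\varphi \mapsto \varphi_{\langle\rangle}$ / $\phi \mapsto \phi_{\ghmldiamond{}}$ syntactic bijection). Concretely, I would take two surrogate Kripke structures $\mathbf{G_1}, \mathbf{G_2}$ with $G_1, G_2 \in \mathfrak{h}$, and two states $p \in P$ (from $\mathbf{G_1}$) and $q \in Q$ (from $\mathbf{G_2}$). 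The key observation is that a state $p$ of the surrogate $\mathbf{G_1}$ ``is'' the root of the sub-GST $G_1|_p$, and the surrogate of $G_1|_p$ is exactly the sub-structure of $\mathbf{G_1}$ generated by $p$; so bisimulation of states $p, q$ in the surrogates coincides with $\bisim$ of the generated sub-surrogates, which by Theorem~\ref{thm:bisim_weak_bisim_surrogates} coincides with $G_1|_p \bisim_w G_2|_q$.

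The main steps, in order, are: (1) state precisely the correspondence ``surrogate of $G|_p$ $=$ sub-structure of $\mathbf{G}$ generated at $p$'' — this should be immediate from Definition~\ref{def:surrogate_lts}, since $R^G_{|E|}$ only relates $\preceq_P$-comparable nodes and the labeling of trajectories $(p_1,p_2]$ is unaffected by restricting to nodes $\succeq p$; (2) conclude $p \bisim q$ in the surrogates $\iff G_1|_p \bisim_w G_2|_q$ by invoking Theorem~\ref{thm:bisim_weak_bisim_surrogates} applied to $G_1|_p$ and $G_2|_q$ (both still lie in $\mathcal{U}$, which must be closed under sub-GSTs, or at least the relevant sub-GSTs are captured by $(\bar{\mathcal{I}},L)$ — I would note this closure assumption explicitly); (3) conclude $p \approx_{\text{HML}} q$ in the surrogates $\iff G_1|_p \approx_{\text{GHML}} G_2|_q$ using both directions of Theorem~\ref{thm:ghml_hml_surrogates} together with the fact that $\varphi \mapsto \varphi_{\langle\rangle}$ is a bijection between $\Phi_{\text{GHML}}(\bar{\mathcal{I}},L)$ and $\Phi_{\text{HML}}(|\mathcal{M}(\bar{\mathcal{I}},L)|)$ whose inverse is $\phi \mapsto \phi_{\ghmldiamond{}}$; (4) chain the equivalences: $p \bisim q \iff G_1|_p \bisim_w G_2|_q \iff G_1|_p \approx_{\text{GHML}} G_2|_q$ (this middle step is the VHHM property of $\mathfrak{h}$) $\iff p \approx_{\text{HML}} q$. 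Since $p,q$ were arbitrary states of arbitrary surrogates in $\{\mathbf{G} : G \in \mathfrak{h}\}$, this is exactly the VHHM property of Definition~\ref{def:vh_hm_class} for that set of Kripke structures with respect to $\Phi_{\text{HML}}(|\mathcal{M}(\bar{\mathcal{I}},L)|)$.

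I expect the main obstacle to be step~(1) together with the bookkeeping about sub-GSTs: one must be careful that ``any two states $p \in S, q \in T$'' in the VHHM property for Kripke structures corresponds precisely to ``any two sub-GSTs $G_1|_p, G_2|_q$'' in Definition~\ref{def:vhhm_class_gsts}, and that the sub-GSTs involved are themselves captured by the same domain of modalities $(\bar{\mathcal{I}},L)$ so that Theorems~\ref{thm:ghml_hml_surrogates}~and~\ref{thm:bisim_weak_bisim_surrogates} actually apply to them (every trajectory of $G|_p$ is a trajectory of $G$, hence captured). A secondary, purely notational point is that valuations are ``universal'' on both sides, so the clause in Definition~\ref{def:bisim_kripke_structures} concerning $\theta \in \Theta$ is vacuously satisfied and $\bisim_K$ collapses to ordinary $\bisim$ on the underlying labeled transition systems — I would remark on this so the reader sees why the propositional-variable machinery causes no trouble here. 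Everything else is a routine chaining of biconditionals.
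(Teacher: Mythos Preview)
Your proposal is correct and follows essentially the same route as the paper: invoke Theorem~\ref{thm:ghml_hml_surrogates} to pass between $\approx_{\text{HML}}$ and $\approx_{\text{GHML}}$, invoke Theorem~\ref{thm:bisim_weak_bisim_surrogates} to pass between $\bisim_K$ and $\bisim_w$, and use the VHHM property of $\mathfrak{h}$ as the middle link in the chain of biconditionals. The paper's proof is terser and leaves implicit the bookkeeping you spell out in steps~(1) and your remarks on sub-GST closure and universal valuations, but the argument is the same.
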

	\begin{proof}
		This is a direct consequence of Theorems \ref{thm:ghml_hml_surrogates} 
		and \ref{thm:bisim_weak_bisim_surrogates}. First, check that for any 
		node $p$ in a surrogate Kripke structure $\mathbf{G_1}$ and node $q$ in 
		surrogate Kripke structure $\mathbf{G_2}$, $p \; \approx_{\text{HML}} 
		\; q$ implies $p \; \bisim_K \; q$. Because of Theorem 
		\ref{thm:ghml_hml_surrogates}, we know that $p \; \approx_{\text{HML}} 
		\; q$ implies $G_1|_p \approx_{\text{GHML}} G_2|_q$. But $\mathfrak{h}$ 
		is a VHMM class of GSTs, so the preceding implies that $G_1|_p \; 
		\bisim_{w} \; G_2|_q$, and Theorem 
		\ref{thm:bisim_weak_bisim_surrogates} then implies that $p \; \bisim_K 
		\; q$ as required. The converse follows by using first Theorem 
		\ref{thm:bisim_weak_bisim_surrogates} and then Theorem 
		\ref{thm:ghml_hml_surrogates}.
	\end{proof}

	The essential intuition here is that \emph{any} set of GSTs that satisfies 
	the VHHM property will yield a set of surrogate Kripke structures that 
	satisfies the VHHM property; then by part 2 of Theorem 
	\ref{thm:vhhm_maximal_classes}, these surrogate Kripke structures will be 
	contained in a maximal VHHM class of Kripke structures. Thus, a VHHM class 
	of GSTs can only be enlarged so long as its surrogate Kripke structures 
	do not escape a maximal VHHM class of Kripke structures, so \emph{every 
	maximal VHHM class of GSTs can be matched to at least one maximal VHHM 
	class of Kripke structures}. This is expressed in the following corollary.
	\begin{corollary}
	\label{cor:vhhml_class_bound_normal_logic}
		Let $\mathcal{U}$ and $(\bar{\mathcal{I}},L)$ be as in Definition 
		\ref{def:vhhm_class_gsts}. If $\mathfrak{h} \subseteq \mathcal{U}$ is a 
		VHHM class of GSTs, then there exists a Henkin-like model 
		$\mathbf{HC}^K$ such that $\{\mathbf{G} : G \in \mathfrak{h}\} 
		\subseteq \mathsf{BS}(\mathbf{HC}^K)$. Furthermore, if there is a set 
		$\mathfrak{h}^\prime \subseteq \mathcal{U}$ such that $\mathfrak{h} 
		\subseteq \mathfrak{h}^\prime$ and $\{\mathbf{G} : G \in 
		\mathfrak{h}^\prime\}  \subseteq \mathsf{BS}(\mathbf{HC}^K)$, then 
		$\mathfrak{h}^\prime$ is a VHHM class of GSTs.
	\end{corollary}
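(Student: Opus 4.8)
The plan is to derive both assertions from the surrogate-structure results already in hand, so that the corollary amounts to bookkeeping on top of Theorems~\ref{thm:vhhm_maximal_classes}, \ref{thm:ghml_hml_surrogates}, \ref{thm:bisim_weak_bisim_surrogates} and \ref{thm:gst_vhhm_property_to_kripke_vhhm_property}.

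For the first claim I would simply observe that Theorem~\ref{thm:gst_vhhm_property_to_kripke_vhhm_property} already tells us that $\{\mathbf{G}: G \in \mathfrak{h}\}$ satisfies the VHHM property of Definition~\ref{def:vh_hm_class} with respect to $\Phi_{\text{HML}}(|\mathcal{M}(\bar{\mathcal{I}},L)|)$, i.e.\ it is a VHHM class of Kripke structures over the label set $|\mathcal{M}(\bar{\mathcal{I}},L)|$. Part~2 of Theorem~\ref{thm:vhhm_maximal_classes} then applies verbatim and yields a Henkin-like model $\mathbf{HC}^K$ with $\{\mathbf{G}: G\in\mathfrak{h}\}\subseteq\mathsf{BS}(\mathbf{HC}^K)$. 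That is the entire content of the first assertion; no further argument is needed.

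For the second claim, I would fix $\mathfrak{h}^\prime\subseteq\mathcal{U}$ with $\{\mathbf{G}:G\in\mathfrak{h}^\prime\}\subseteq\mathsf{BS}(\mathbf{HC}^K)$ and verify Definition~\ref{def:vhhm_class_gsts} directly. Take $G_1,G_2\in\mathfrak{h}^\prime$ and sub-GSTs $G_1|_p$, $G_2|_q$ (note that trajectories from nodes of $G_1|_p$ are trajectories of $G_1$, hence these sub-GSTs are still captured by $(\bar{\mathcal{I}},L)$, so Theorems~\ref{thm:ghml_hml_surrogates} and~\ref{thm:bisim_weak_bisim_surrogates} apply to them). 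The first step is to identify the surrogate Kripke structure of $G_1|_p$ with the submodel of $\mathbf{G_1}$ generated by the state $p$: this is immediate from Definition~\ref{def:surrogate_lts}, since every transition $p_1\overset{|E|}{\rightarrow}p_2$ of $\mathbf{G_1}$ forces $p_1\preceq_P p_2$, and conversely every $p'\succ p$ is reached from $p$ in one step because the bounded trajectory $(p,p']$ of $G_1$ is order equivalent to some modal execution by capture. Because weak bisimulation, $\bisim_K$ and HML-satisfaction are all insensitive to restriction to a generated submodel, Theorem~\ref{thm:bisim_weak_bisim_surrogates} (applied to $G_1|_p$ and $G_2|_q$) gives $G_1|_p\;\bisim_w\;G_2|_q \Leftrightarrow p\;\bisim_K\;q$, and Theorem~\ref{thm:ghml_hml_surrogates} (whose two parts together make $\varphi\mapsto\varphi_{\langle\rangle}$ a satisfaction-preserving bijection from $\Phi_{\text{GHML}}(\bar{\mathcal{I}},L)$ onto $\Phi_{\text{HML}}(|\mathcal{M}(\bar{\mathcal{I}},L)|)$) gives $G_1|_p\;\approx_{\text{GHML}}\;G_2|_q \Leftrightarrow p\;\approx_{\text{HML}}\;q$, where both right-hand sides refer to the states $p$ of $\mathbf{G_1}$ and $q$ of $\mathbf{G_2}$.

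Finally, since $\mathbf{G_1},\mathbf{G_2}\in\mathsf{BS}(\mathbf{HC}^K)$ and, by part~1 of Theorem~\ref{thm:vhhm_maximal_classes}, $\mathsf{BS}(\mathbf{HC}^K)$ is a (maximal) VHHM class of Kripke structures, Definition~\ref{def:visser_hm_property} supplies $p\;\bisim_K\;q \Leftrightarrow p\;\approx_{\text{HML}}\;q$ for these two states. Chaining this with the two equivalences of the previous paragraph gives $G_1|_p\;\bisim_w\;G_2|_q \Leftrightarrow G_1|_p\;\approx_{\text{GHML}}\;G_2|_q$, so $\mathfrak{h}^\prime$ satisfies the VHHM property for GSTs, i.e.\ it is a VHHM class of GSTs. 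The only genuinely delicate point is the identification in the previous paragraph: confirming that the surrogate of a sub-GST is exactly the generated submodel of the ambient surrogate, and that bisimilarity and modal equivalence of two states living in (possibly distinct) large surrogates agree with the pointed notions to which Theorems~\ref{thm:ghml_hml_surrogates} and~\ref{thm:bisim_weak_bisim_surrogates} are literally stated; with that in place the rest is a direct appeal to the cited theorems.
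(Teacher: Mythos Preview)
Your proposal is correct and follows essentially the same approach as the paper: the paper does not give a separate formal proof of this corollary at all, but merely precedes it with an intuition paragraph invoking Theorem~\ref{thm:gst_vhhm_property_to_kripke_vhhm_property} together with part~2 of Theorem~\ref{thm:vhhm_maximal_classes} for the first assertion, and implicitly part~1 of Theorem~\ref{thm:vhhm_maximal_classes} plus the translation Theorems~\ref{thm:ghml_hml_surrogates} and~\ref{thm:bisim_weak_bisim_surrogates} for the second. Your write-up is in fact more careful than the paper's, since you explicitly address the identification of the surrogate of a sub-GST with a generated submodel of the ambient surrogate---a point the paper leaves entirely implicit.
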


	However, we have not yet established that every maximal VHHM class of 
	Kripke structures corresponds to a maximal VHHM class of GSTs. Indeed, the 
	fact that Theorem \ref{thm:vhhm_maximal_classes} makes no assumptions about 
	valuations immediately suggests that several maximal VHHM classes of the 
	form $\mathsf{BS}(\mathbf{HC}^K)$ will correspond to the same maximal VHHM 
	class of GSTs. As it turns out, there are other yet more profound 
	redundancies in the Henkin-like models derived from the canonical model 
	over the smallest normal logic, $K$. These differences are described in the 
	following theorem, though it too falls short of an absolute 
	characterization of maximal VHHM classes for GSTs.
	\begin{theorem}[Maximal VHHM classes for GSTs and refined Henkin-like models]
		\label{thm:smaller_vhhm_classes_gsts}
		Let $\mathcal{U}$ and $(\bar{\mathcal{I}},L)$ be as in Definition 
		\ref{def:vhhm_class_gsts}, and let $\mathfrak{h} \subseteq \mathcal{U}$ 
		be a VHHM class of GSTs. Furthermore, let $\Delta$ be the smallest 
		normal logic that contains all of the following:
		\begin{itemize}
			\item the propositional variables $\Theta$;

			\item $ \forall |E_1|, |E_2| \in 
				|\mathcal{M}(\bar{\mathcal{I}},L)|$, the schema $\langle |E_1| 
				\rangle \langle |E_2| \rangle \varphi \rightarrow \langle 
				|E_{1;2}| \rangle \varphi$ ; and 

			\item $\forall |E|, |E_1|, |E_2| \in 
				|\mathcal{M}(\bar{\mathcal{I}},L)|$ such that there is an order 
				equivalence $\lambda : \; I_1;I_2 \rightarrow \text{dom}(E)$ 
				with $E\circ\lambda(1,\cdot) \in |E_1|$ and 
				$E\circ\lambda(2,\cdot) \in |E_2|$, the schema $\langle |E| 
				\rangle \varphi \rightarrow \langle |E_1| \rangle \langle |E_2| 
				\rangle  \varphi$.
		\end{itemize}
		Then $\{\mathbf{G} : G \in \mathfrak{h}\} \subseteq 
		\mathsf{BS}(\mathbf{HC}^\Delta)$ for some Henkin-like model 
		$\mathbf{HC}^\Delta$ that preserves the first-order transition-relation 
		properties imposed on $\mathbf{C}^\Delta$ by the schemata above. 
		Furthermore, if there is a set $\mathfrak{h}^\prime \subseteq 
		\mathcal{U}$ such that $\mathfrak{h} \subseteq \mathfrak{h}^\prime$ and 
		$\{\mathbf{G} : G \in \mathfrak{h}^\prime\} \subseteq 
		\mathsf{BS}(\mathbf{HC}^\Delta)$, then $\mathfrak{h}^\prime$ is a VHHM 
		class of GSTs.
	\end{theorem}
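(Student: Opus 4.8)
The plan is to mimic the proof sketched for Theorem~\ref{thm:gst_vhhm_property_to_kripke_vhhm_property} and Corollary~\ref{cor:vhhml_class_bound_normal_logic}, but now tracking the additional first-order frame conditions that the schemata of $\Delta$ impose on the canonical model. First I would observe that the two schema families listed are exactly the ``concatenation'' axioms forced on surrogate Kripke structures: by construction of $\mathbf{G}$ in Definition~\ref{def:surrogate_lts}, if $p_1 \overset{|E_1|}{\rightarrow} p_2 \overset{|E_2|}{\rightarrow} p_3$ then $(p_1,p_3]=(p_1,p_2]\cup(p_2,p_3]$ is order equivalent to the lexicographic concatenation, hence $p_1\overset{|E_{1;2}|}{\rightarrow}p_3$; and conversely any trajectory order equivalent to a concatenation splits at the image of the cut point. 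So every surrogate Kripke structure $\mathbf{G}$ for $G\in\mathcal{U}$ validates both schemata, i.e. $\mathbf{G}$ is a model of $\Delta$ and not merely of $K$. This is the only genuinely new ingredient; once it is in place the rest is bookkeeping.

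Next I would replay Theorem~\ref{thm:gst_vhhm_property_to_kripke_vhhm_property}: given a VHHM class $\mathfrak{h}$ of GSTs, the set $\{\mathbf{G}:G\in\mathfrak{h}\}$ satisfies the VHHM property of Definition~\ref{def:vh_hm_class} with respect to $\Phi_{\text{HML}}(|\mathcal{M}(\bar{\mathcal{I}},L)|)$ — the proof is verbatim the one given there, using Theorems~\ref{thm:ghml_hml_surrogates} and~\ref{thm:bisim_weak_bisim_surrogates}, since adding valuations does not disturb the modal-equivalence/bisimulation correspondence. Now I would run the argument behind Theorem~\ref{thm:vhhm_maximal_classes}(2), but relative to $\Delta$ instead of $K$: because $\Delta$ is a normal logic (it is defined to be one), it has a canonical model $\mathbf{C}^\Delta$, which by Theorem~\ref{thm:henkin_property_canonical_model} enjoys the Henkin property; moreover $\Delta$ is sound and complete with respect to the class of Kripke frames validating the $\Delta$-schemata, so each $\mathbf{G}$ with $G\in\mathfrak{h}$ (being a $\Delta$-model) has, for every state $p$, a $\Delta$-maximal theory, and the map $p\mapsto\{\phi : p\models\phi\}$ embeds $\mathbf{G}$ bisimilarly into a generated submodel of a Henkin-like refinement $\mathbf{HC}^\Delta$ of $\mathbf{C}^\Delta$ — ``Henkin-like'' here meaning a submodel of $\mathbf{C}^\Delta$ obtained by pruning transitions while keeping the Henkin property, and ``refined'' meaning we keep enough transitions that the frame conditions induced by the $\Delta$-schemata still hold. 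The key point is that the $\Delta$-schemata are Sahlqvist-style (each is of the shape $\langle|E_1|\rangle\langle|E_2|\rangle\varphi\rightarrow\langle|E_{1;2}|\rangle\varphi$ or its converse), so they correspond to concrete first-order closure conditions on $R^\Delta_{|E|}$, and one can choose $\mathbf{HC}^\Delta$ so that those conditions survive. Hence $\{\mathbf{G}:G\in\mathfrak{h}\}\subseteq\mathsf{S}(\mathbf{HC}^\Delta)^{\bisim}=\mathsf{BS}(\mathbf{HC}^\Delta)$ for such a refined model.

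For the ``furthermore'' clause I would argue as in Corollary~\ref{cor:vhhml_class_bound_normal_logic}: if $\mathfrak{h}\subseteq\mathfrak{h}'\subseteq\mathcal{U}$ with $\{\mathbf{G}:G\in\mathfrak{h}'\}\subseteq\mathsf{BS}(\mathbf{HC}^\Delta)$, then since $\mathsf{BS}(\mathbf{HC}^\Delta)$ is a VHHM class of Kripke structures (every $\Delta$-maximal theory is realised by a unique state of the single model $\mathbf{HC}^\Delta$, so HML-equivalence coincides with $\bisim_K$ on it, and bisimilarity is transitive), any two sub-GSTs in $\mathfrak{h}'$ that are GHML-equivalent have HML-equivalent surrogates (Theorem~\ref{thm:ghml_hml_surrogates}), hence bisimilar surrogates, hence are weakly bisimilar (Theorem~\ref{thm:bisim_weak_bisim_surrogates}); the converse direction is the same chain run backwards. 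So $\mathfrak{h}'$ is a VHHM class of GSTs, as claimed.

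The main obstacle I anticipate is the existence and construction of the \emph{refined} Henkin-like model $\mathbf{HC}^\Delta$ that simultaneously satisfies the Henkin property \emph{and} preserves the frame conditions coming from the $\Delta$-schemata. The original Henkin-like models of Definition~\ref{def:henkin_like_model} are pruned submodels of $\mathbf{C}^K$ with no constraint beyond the Henkin property, so one must check that the standard ``pruning'' that yields a Henkin-like model does not destroy the composition/decomposition closure of the relations — or, more carefully, that among the Henkin-like submodels of $\mathbf{C}^\Delta$ there is at least one whose frame still validates $\Delta$. I would handle this by taking $\mathbf{C}^\Delta$ itself (which validates $\Delta$ by definition and has the Henkin property by Theorem~\ref{thm:henkin_property_canonical_model}) as a witness when no further pruning is needed, and otherwise argue that any embedding target used in the completeness argument for $\Delta$ can be closed off under the required first-order conditions without adding states or losing the Henkin property, since those conditions only ever \emph{add} transitions demanded by formulas already true at the relevant states. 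A secondary, purely presentational difficulty is stating precisely what ``preserves the first-order transition-relation properties imposed on $\mathbf{C}^\Delta$ by the schemata above'' means; I would make this explicit by writing out the two closure conditions on the family $\{R^{\mathbf{H}\Delta}_{|E|}\}$ (closure under concatenation of composable transitions, and closure under splitting of a transition whose label decomposes) and requiring $\mathbf{HC}^\Delta$ to satisfy exactly those.
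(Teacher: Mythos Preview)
Your proposal is correct and takes essentially the same approach as the paper: the paper's proof is a two-line sketch asserting that the $\Delta$-schemata reflect structure already present in surrogate Kripke structures and that Theorem~\ref{thm:vhhm_maximal_classes} remains applicable when restricted to such structures. Your plan expands on precisely this, with considerably more detail (including the discussion of constructing the refined Henkin-like model) than the paper itself provides.
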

	\begin{proof}(Theorem \ref{thm:smaller_vhhm_classes_gsts}.)
		All of the additions to the logic $\Delta$ reflect structure in 
		surrogate Kripke structures (including in the valuations used), and 
		Theorem \ref{thm:vhhm_maximal_classes} remains applicable when confined 
		to such restricted Kripke structures. 
	\end{proof}
		The reader will recognize in the formulas $\langle |E_1| \rangle 
		\langle |E_2| \rangle \varphi \rightarrow \langle |E_{1;2}| \rangle 
		\varphi$ and $\langle |E| \rangle \varphi \rightarrow \langle |E_1| 
		\rangle \langle |E_2| \rangle  \varphi$ the schemata for something like 
		\emph{transitivity} and \emph{weak density}, respectively 
		\cite{goldblatt_logics_1992}. That the surrogate Kripke structures 
		satisfy these conditions is a reflection of the unique semantics we 
		have specified for GHML: in particular, following one trajectory in a 
		GST followed by another implies the existence of a third, ``longer'' 
		trajectory (transitivity), and following a non-trivial trajectory 
		implies the existence of ``smaller'' trajectories ending and beginning 
		from some intermediary point (weak density).
		On the other hand, the additional constraint on the Henkin-like model 
		in Theorem \ref{thm:smaller_vhhm_classes_gsts} is necessary because 
		just satisfying the relevant schemata under one valuation is not enough 
		to impose the first-order transition relation properties that surrogate 
		Kripke structures possess (see \cite{goldblatt_logics_1992}).

		It is also worth noting that our choice of \emph{equivalence classes} 
		of modal executions is relevant here. Had we not chosen to label 
		transitions in the surrogate Kripke structure with such 
		\emph{equivalence classes}, there would be \emph{multiple} 
		order-equivalent transitions between any two nodes. This would lead to 
		additional Henkin-like models that fail to respect the semantics of 
		weak bisimulation: i.e. among a collection of order-equivalent 
		transitions, some could be present in the Henkin-like model while some 
		could be absent.

		Finally, it is important to note that neither Corollary 
		\ref{cor:vhhml_class_bound_normal_logic} nor Theorem 
		\ref{thm:smaller_vhhm_classes_gsts} imply that \emph{every} maximal 
		VHHM class of Kripke structures corresponds to a maximal VHHM class of 
		GSTs in $\mathcal{U}$. For one, the set $\mathcal{U}$ may be deficient. 
		For another, it remains as future work to show that every Henkin-like 
		model over the canonical model for logic $\Delta$ reflects the 
		surrogate Kripke structure of some GST.
	\subsection{Properties of Maximal VHHM Classes of GSTs} 
	\label{sub:properties_of_maximal_vhhm_classes_of_gsts}
	In this subsection we make two small remarks that identify some properties 
	of interest with regard to maximal VHHM classes of GSTs.

	First, we note that maximal VHHM classes are not so small that modal 
	equivalence within such a class implies \emph{strongly bisimulation} 
	(Definition \ref{def:strong-simulation}). That is to say there is a maximal 
	VHHM class which contains two GSTs that satisfy the same formulas yet are 
	not strongly bisimilar. Such a situation is illustrated in the following 
	example.
	\begin{example}
		\label{ex:modally_equiv_gsts_not_strongly_bisimilar}
		Consider the domain of modalities given by the set $\bar{\mathcal{I}} = 
		(\mathbb{R},\leq_\mathbb{R})$ and the set $L = \{\alpha,\beta\}$. 
		Furthermore, for a subset $A$ of $[0,1] \subset \mathbb{R}$, define the 
		GST $G_A$ as $G_A = ([0,1] \cup \left(\{1\} \times A 
		\right),\preceq_A,0,\mathcal{L}_A)$ where $\preceq_A = \cup_{a\in A} \{ 
		(x, (1,a)) : x \leq a\} ~\cup \leq_{\mathbb{R} \cap [0,1]}$ and 
		$\mathcal{L}_A : x \mapsto \alpha ~;  (1,a) \mapsto \beta$. We claim 
		that the GSTs $G_{\mathbb{Q} \cap (0,1)}$ and $G_{(0,1)\backslash 
		\mathbb{Q}}$ together satisfy the VHHM property: in fact their 
		surrogate Kripke structures are both bisimilar to the same image-finite 
		Kripke structure. Nevertheless, they are clearly not strongly 
		bisimilar, since there is no order preserving way of matching 
		$\mathbb{Q}\cap (0,1)$ with $(0,1)\backslash \mathbb{Q}$.
	\end{example}

	Second, we note that there are GSTs that don't belong to any VHHM class. 
	This is ultimately because there are Kripke structures that don't belong to 
	any VHHM class of Kripke structures: the following example describes just 
	such a Kripke structure.
	\begin{example}
	\label{ex:non_self_bisimilar_ks}
		Consider the Kripke structure depicted in Figure 
		\ref{fig:non_self_bisimilar_ks} with a valuation that assigns all 
		propositional variables to be true in all states. We claim that the 
		shaded states satisfy the same formulas, yet they are clearly not 
		bisimilar. Hence, this Kripke structure doesn't belong to any VHHM 
		class of Kripke structures.
	\end{example}
	\begin{figure}
		\centering
		\includegraphics[width=0.9\textwidth]{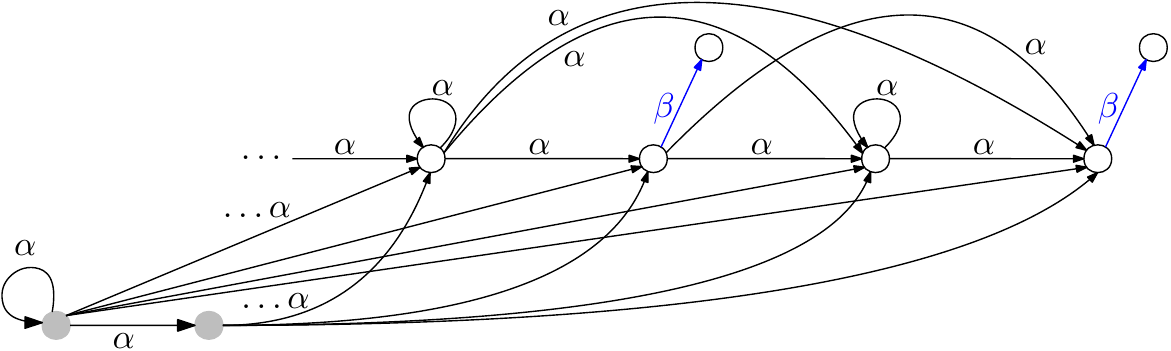}
		\caption{Kripke structure for Theorem \ref{ex:non_self_bisimilar_ks}. $L = \{\alpha, \beta, \alpha.\beta\}$; all black arrows have label $\alpha$; $\alpha.\beta$ transitions are not shown but span each concatenated $\alpha$ transition and $\beta$ transition.}
		\label{fig:non_self_bisimilar_ks}
	\end{figure}
	The proof of the claim in Example \ref{ex:non_self_bisimilar_ks} is 
	nontrivial, and as far as we know, there are no results even suggesting 
	that such Kripke structures exist. Importantly, Example 
	\ref{ex:non_self_bisimilar_ks} implies a similar example for GSTs because 
	it contains a Kripke structure that also satisfies the schemata in Theorem 
	\ref{thm:smaller_vhhm_classes_gsts}. The following example makes this 
	explicit.
	\begin{example}
	\label{ex:non_self_bisimilar_gst}
		Recall the definition of $G_A$ from Example 
		\ref{ex:modally_equiv_gsts_not_strongly_bisimilar}, and consider the 
		GST $G_X$ for $X = \{1/2 + 1/(n+2) : n \in \mathbb{N} \} \subset 
		(0,1)$. Then $G_X$ doesn't belong to any VHHM class of GSTs because its 
		surrogate Kripke structure is bisimilar to the Kripke structure in 
		Example \ref{ex:non_self_bisimilar_ks} (when it is suitably 
		relabeled).
	\end{example}


\section{Conclusions and Future Work} 
\label{sec:conclusions_and_future_work}
	In this paper we have proposed a generalization of Hennessy-Milner logic 
	that is suitable for GSTs, and we have used this logic to exhibit some 
	results regarding Hennessy-Milner classes with respect to weak 
	bisimulation. Nevertheless, there is a great deal of work still to be done. 
	One key avenue of future research lies in deciding whether the 
	characterization in Theorem \ref{thm:smaller_vhhm_classes_gsts} really 
	describes all maximal VHHM classes of GSTs (given a sufficiently large set 
	of GSTs to begin with). Another important avenue of future work is to 
	investigate what implications these VHHM classes have for common hybrid 
	system models, such as the behavioral modeling framework of 
	\cite{willems_behavioral_2007}.
\bibliographystyle{eptcs} %
\bibliography{mybib}

\end{document}